\title{The Limits of Popularity-Based Recommendations,\\ and the Role of Social Ties\titlenote{\textcopyright ACM, 2016. This is the author's version of the work. It is posted here by permission of ACM for your personal use. Not for redistribution. The definitive version will appear in the proceedings of KDD 2016. http://doi.acm.org/10.1145/2939672.2939797}}
\date{\today}
\newtheorem{lemma}{Lemma}
\newtheorem{theorem}{Theorem}
\newtheorem{corollary}{Corollary}
\newtheorem{definition}{Definition}
\newtheorem{fact}{Fact}
\newcommand{\xlim}{\bm{x}^{\infty}}
\newcommand{\E}{\mathbf{E}}
\newcommand{\0}{\bm{0}}
\newcommand{\1}{\bm{1}}
\newcommand{\I}{\bm{I}}
\newcommand{\M}{\bm{M}}
\newcommand{\A}{\bm{A}}
\newcommand{\boldf}{\bm{f}}
\newcommand{\Mthin}{M}
\newcommand{\bH}{\mathbf{H}}
\newcommand{\bL}{\bm{L}}
\newcommand{\bh}{\mathbf{h}}
\newcommand{\bb}{\bm{b}}
\newcommand{\x}{\mathbf{x}}
\newcommand{\bgamma}{\bm{\gamma}}
\author{
Marco Bressan\thanks{\small Supported in part by a Google Focused Research Award, by the Sapienza Grant C26M15ALKP, by the SIR Grant RBSI14Q743, and by the ERC Starting Grant DMAP 680153.}\;\thanks{Corresponding author. Address: {\tt bressan@di.uniroma1.it}}\\
Sapienza University of Rome\\
Rome, Italy\\
\and
Stefano Leucci\footnotemark[2]\\ 
Sapienza University of Rome\\
Rome, Italy\\
\and
Alessandro Panconesi\footnotemark[2]\\ 
Sapienza University of Rome\\
Rome, Italy\\
\and
Prabhakar Raghavan\\
Google\\
Mountain View, CA
\and
Erisa Terolli\footnotemark[2]\\
Sapienza University of Rome \\
Rome, Italy \\
}
\begin{document}

\begin{CCSXML}
<ccs2012>
<concept>
<concept_id>10003120.10003130.10003131.10003270</concept_id>
<concept_desc>Human-centered computing~Social recommendation</concept_desc>
<concept_significance>500</concept_significance>
</concept>
<concept>
<concept_id>10003120.10003130</concept_id>
<concept_desc>Human-centered computing~Collaborative and social computing</concept_desc>
<concept_significance>300</concept_significance>
</concept>
<concept>
<concept_id>10003120.10003130.10003131.10003292</concept_id>
<concept_desc>Human-centered computing~Social networks</concept_desc>
<concept_significance>300</concept_significance>
</concept>
<concept>
<concept_id>10002951.10003317.10003347.10003350</concept_id>
<concept_desc>Information systems~Recommender systems</concept_desc>
<concept_significance>300</concept_significance>
</concept>
<concept>
<concept_id>10003752.10010070.10010099.10010106</concept_id>
<concept_desc>Theory of computation~Market equilibria</concept_desc>
<concept_significance>300</concept_significance>
</concept>
<concept>
<concept_id>10002950.10003648.10003700</concept_id>
<concept_desc>Mathematics of computing~Stochastic processes</concept_desc>
<concept_significance>300</concept_significance>
</concept>
<concept>
<concept_id>10003033.10003106.10003114.10011730</concept_id>
<concept_desc>Networks~Online social networks</concept_desc>
<concept_significance>300</concept_significance>
</concept>
</ccs2012>
\end{CCSXML}

\ccsdesc[300]{ Information systems~Recommender systems}
\ccsdesc[300]{ Net-works~Online social networks}
\ccsdesc[300]{ Mathematics of computing~Stochastic processes}

\CopyrightYear{2016}
\conferenceinfo{KDD '16,}{August 13 - 17, 2016, San Francisco, CA, USA}
\isbn{978-1-4503-4232-2/16/08}\acmPrice{\$15.00}
\doi{http://dx.doi.org/10.1145/2939672.2939797}

\maketitle

\begin{abstract}
In this paper we introduce a mathematical model that captures some of the salient features of recommender systems that are based on popularity and that try to exploit social ties among the users.  We show that, under very general conditions, the market always converges to a steady state, for which we are able to give an explicit form. Thanks to this we can tell rather precisely how much a market is altered by a recommendation system, and determine the power of users to influence others. Our theoretical results are complemented by experiments with real world social networks showing that social graphs prevent large market distortions in spite of the presence of highly influential users.
\end{abstract}


\section{Introduction}
Recommender systems (RS's) are a paradigmatic example of the interaction between humans and algorithms in the cultural arena. 
From YouTube videos to books on Amazon and movies on iTunes or Netflix, online choices are increasingly mediated by such algorithms. RS's are popular for their ability to conjure up non-trivial relationships between products or news stories, and as such they are becoming powerful economic actors.
It is this economic dimension of RS's that we intend to explore in this paper. 

A fundamental question in this context, and the one we intend to tackle in this paper, is the following: to what extent can a market be altered by a RS?
For instance, assume that an online bookstore starts adopting a RS.
Will unknown books become hits and vice-versa?
How and how much will the habit of the typical reader change?
A large body of work has investigated questions of this kind at the individual, user level \cite{Cosley&2003,salganik2006,Sharma&2013,zhu2014}. 
For instance, a nice experiment described in \cite{salganik2006} suggests that even a simple type of feedback, such as providing the ranking based on the number of downloads, may significantly boost market shares.
Studies such as these provide precious insights but it is difficult to derive from them quantitatively accurate predictions about markets in the long-run.
Also, it is entirely conceivable that the amplification effects observed be temporary, or not transferable to markets whose sheer size and complexity dwarfs that of artificial settings. 
Other studies have tried to draw conclusions from behavioural data of real markets \cite{Ehrmann&2008,Duan&2008,Duan&2008b,Nguyen&2014}, but the paucity of available data makes them tentative.

In this paper we try a different approach by introducing a natural model for markets that are governed by a RS. 
Our model is simple enough to allow for a precise mathematical analysis of long-term behaviours, while at the same time it captures some of the relevant features of RS's.
Specifically, there are two ingredients that are known to play an important role in RS's and that we want our model to represent.
The first is popularity, {\em i.e.} the degree of success of an item in its market.
Several studies indicate that popularity feedback ({\em e.g.} number of downloads, user ratings, number of views, etc) can be a powerful determinant of online behaviour \cite{salganik2006,Celma&2008,zhu2014}.
The second ingredient is, for lack of a better terminology, the ``web of kinship'' among users. Social ties are the paradigmatic example here.
They are recognized to be an important factor that shape user choices, and recommenders try to leverage them in a variety of ways \cite{Ma&2009,Ye&2012,Zhao&2013,Chaney&2015}.
More generally RS's try to exploit different types of similarities between users, such as explicit links in an online social network, similarity between patterns of consumer choice, or even the result of complex computations like matrix factorization.
All these diverse situations can be simply and  usefully represented, as we show in this paper, by assuming the existence of a (weighted) network connecting the users.

In simplified form, our mathematical model of market evolution is as follows. In the market we have users (or buyers), connected via a network, and products. When a user buys a product, it either follows its personal inclinations (mode led by a private probability distribution over the products) or it follows a recommendation with a certain probability. When it does, it consults a random neighbour in the network, the \textit{recommender node}. The recommender node picks a random product from its own list of products purchased so far, with a probability that is proportional to the popularity of the product in that moment, and the item so selected is bought by the current buyer.

The main theoretical result in this paper is that this type of system, encompassing a large class of recommenders, always converges to a unique, stationary limit. The result holds under very general conditions. First, users can buy products at different rates.
User $u$, say, can be ten times faster than user $v$ and one thousand times slower than user $w$ etc.
Second, the probability of following the recommendation may change from person to person -- each buyer can have his/her own personal level of trust in recommendations,  and also trust his/her friends to differing degrees. 
Third, the probability with which a product is recommended does not need to be uniform, but it can depend on time.
For instance, we could recommend only the last ten items bought, or skew the probability with which an item is recommended in many other ways. 
Finally, the graph can be any directed graph (so $u$ could follow $v$'s recommendations but not vice versa) and users can have their own personal history of purchases before the recommendation mechanism starts operating. 
None of these affects the result-- the market always converges to a unique limit. 

It is natural to think of the network connecting the users as a social network, but it can in fact be used to model many more scenarios.
For instance, we can make the popularity of an item directly proportional to its number of sales in the whole market by assuming the network to be a clique, or we can represent a powerful super-user that influences the entire network without being affected by it by assuming the network to be an oriented star.

A nice feature of our result is that the limit is given explicitly, in closed form. This makes it possible to explore important properties of the steady state both analytically and computationally. For instance, we can answer easily our motivating question--  can the recommendation mechanism alter the market in a significant way? We can easily tell the distance, in whatever norm, between the initial market shares of every product and those at steady state. We can gain insight into what type of topologies amplify or dampen distorting effects, and we can also carry out a fine grained analysis that tells,  for each user, how much it affects the final outcome. 
A by-product of our analysis is a quantitative notion of user influence that pinpoints exactly how much each user affects the market at the limit. Interestingly,  this notion turns out to be a generalization of Personalized PageRank (PPR) and it is related to the Shapley values of a coalition game that can be defined in our model.

Our limit theorems provide a rich landscape for computational explorations with real-world social networks.  
We explore the distorting effects of various graph topologies and the extent to which a user can influence the market. What emerges is a rather nuanced landscape that is spelled out in full in the rest of the paper. Perhaps the most interesting outcome is that real social networks prove themselves to be a bulwark against market distortion, in spite of the presence of celebrities in their midst.

To summarize, this paper is a step in the direction of a systematic investigation of the long-term effects of recommender systems on the markets in which they are operating. 
This is an important and fascinating topic that deserves more attention that it has gotten so far. We hope that this paper provides some actionable insight into this important problem.

The paper is organized as follows. In \S~\ref{s:rw} we discuss the relevant literature, in \S~\ref{s:t} we define our model and analyse its properties mathematically. We conclude with experiments in \S~\ref{sec:experimental}.

\section{Related Work}\label{s:rw}
Product popularity is known to have powerful feedback effects on user behaviour and choices, and is leveraged by many existing RS's.
In one of the most eloquent demonstrations,~\cite{salganik2006} shows that, in a digital cultural market, just reporting the ranking based on the number of downloads significantly affects user choices and increases the market's inequality and unpredictability.
Another telling illustration,~\cite{zhu2014}, proves that users are likely to change their mind and reverse their previous choices when given information about the popularity of items. 
A study of friend recommendation in Twitter,~\cite{Su&2016}, shows that recommendations of popular users are much more likely to be accepted than recommendations of ``average'' users.
Several existing online marketplaces make use of popularity-based recommendations -- for instance Amazon\footnote{\small\url{http://www.amazon.com}}, Yelp\footnote{\small\url{http://www.yelp.com}}, TripAdvisor\footnote{\small\url{http://www.tripadvisor.com}}, SoundCloud\footnote{\small\url{http://www.soundcloud.com}} and Last.fm\footnote{\small\url{http://www.last.fm}} all suggest items that are currently trending.

Social ties are also understood to play a key role in the recommendation and purchase processes, and are increasingly exploited by modern RS's.
A study on the Epinions\footnote{\small\url{http://www.epinions.com}} network~\cite{Chua&2011} finds that a large part of the items adopted by a user are also adopted by his/her friends.
Similarly, a study on the Yahoo!~Pulse network~\cite{Yang&2011} shows that a user's interests are highly correlated with, and can be predicted by, the interests of his/her social ties -- and vice versa.
An emblematic example of RS based on social ties is~\cite{Ye&2012}, which uses a generative influence model where users probabilistically pick items according to both their and their friends' preferences.
Another study,~\cite{Chaney&2015}, builds a recommendation algorithm assuming that the two forces driving a user's actions are latent preferences and the influence of friends.
Several other works show or propose that recommendations take place over some social network structure~\cite{Jamali&2009,Jamali&2010,Zhao&2013,Huang&2013,Zhao&2014}.

On the other hand, investigations on the long-term market effects of RS's are scarce; as long as mathematical models are concerned, existing work seems confined to the economic literature.
To the best of our knowledge, the only theoretical investigation on the market effects of RS's is~\cite{Fleder&2009}.
In their work, a single user repeatedly buys one of two products according to either his/her personal taste or to a recommendation consisting in the most-bought product so far; depending on the choice of parameters, the process can lead to a concentration of sales.
There are many differences between their model and ours -- in particular, our model allows users to influence each other.
Two earlier studies,~\cite{Degroot1974} on consensus reaching and~\cite{Friedkin&1990} on opinion formation, investigated the convergence of personal beliefs of users interacting in a social group, similarly to how we investigate convergence in a market with social ties.
Their models are however far from ours in many aspects; for instance, ours is stochastic and allows for complex dependences on past events.
Another paper,~\cite{HervasDrane2015}, studies the effects of word-of-mouth on product sales, but focusing on equilibrium strategies for maximizing user satisfaction.
Finally, a few other works have analysed the interactions between RS's and markets, but either with different aims~\cite{KPR2001,Prawesh&2011} or by simulation instead of analysis~\cite{Zeng&2015}.



\section{A Model for Recommender Systems}\label{s:t}
Let us begin with the formal definition of our model.

\paragraph{Products} We have a set $\mathcal{P}$ of $m$ products; each product can be bought multiple times during the purchasing process. Products are bought one at a time in an infinite sequence of time steps $t=1,2,\ldots$. The purchasing process is specified below. 
\newline

\paragraph{Users (or Buyers)} We have a set $\mathcal{U} := \{1,\ldots,n\}$ of $n$ users. 
To each user $u$ is associated:
\begin{itemize}[leftmargin=15pt]\itemsep0pt
 \item a fixed purchasing rate $f_u \in (0, 1)$ representing the rate at which $u$ buys products. The $f_u$'s form a probability distribution, i.e.\ $\sum_{u \in \mathcal{U}} f_u = 1$. The precise 
 role of $f_u$ is clarified below when the purchasing process is defined
 \item a probability distribution $B_u$ over the set of products $\mathcal{P}$ reflecting its {\em personal preferences}. The probability $B_u(p), p \in \mathcal{P}$ is $u$'s {\em personal preference} for $p$.
 \item a fixed probability $\alpha_u \in [0, 1)$; when $u$ buys a product it follows a recommendation with probability $\alpha_u$ (see the definition of the purchasing process below)
 \item a list of products purchased in the past (before time $0$), initially containing $k_u > 0$ items. This list can be arbitrary
\end{itemize}

\paragraph{The graph}
Users are connected via a directed, weighted graph $G = (\mathcal{U}, E)$, possibly with self-loops.
Intuitively, an edge $vu$ denotes the fact that $v$ can influence $u$ when $u$ decides to follow a recommendation, or alternatively, that  $u$ ``trusts'' $v$.
The arc has weight $w_{vu}$ which gives the strength of this trust relationship.
The weights are normalised: the weights of the arcs entering each vertex $u$ (if any) sum up to one. 

It is natural to think of this graph as a social network among the users, but it is in fact a very versatile tool with which we can model several different situations.
For instance, when the graph is a clique and the weights are uniform we have the traditional popularity feedback -- an item is recommended with a probability that is proportional to the number of sales in the whole market.
Or, the graph could embed similarities between users, with the weights reflecting the degree of similarity, and so on so forth.
The graph is a flexible tool thanks to which several situation can be cast in our framework and network effects be explored.

\paragraph{The purchasing process}
At each time step $t = 1, 2, \ldots$, a user $u$ is chosen with probability $f_u$ and buys a product in one of two ways:
\begin{itemize}[leftmargin=15pt]\itemsep0pt
\item
With probability $1-\alpha_u$, it chooses a product at random according to its own distribution $B_u$ of personal preferences. 
\item
Alternatively, with probability $\alpha_u$, it follows the recommendation:
\begin{itemize}[leftmargin=10pt]
\item
First, $u$ picks a {\em recommender node} $v$ at random among those pointing to him (if $u$ has no entering arcs then $\alpha_u=0$). The recommender node $v$ is picked with probability $w_{vu}$, the weight of the arc $vu$. 
\item
Once the recommender node $v$ is chosen, a product $p$ is selected at random from the multiset of products acquired by $v$ up to that moment. Let us denote for now this probability simply as $x^t_v(p)$ and postpone to later its precise definition.
\item
The product $p$ so selected is bought, i.e.\ it is added to the multiset of products purchased by the current buyer $u$.  
\end{itemize}
\end{itemize}
Note that even if we say for convenience that the $i$-th purchase takes place at ``time'' $i$,  we make
no assumptions about time itself (i.e. the wall clock time between any one purchase and the next can be arbitrary). The process simply stipulates that items are bought one after the other, in an infinite sequence.

\paragraph{The weight of history}
To conclude the description of the model we need to define the probability distribution, that we denoted as $\{x^t_v(p)\}_{p \in \cal P}$, with which the products of the recommender node are picked.
A reasonable choice to model popularity-based recommendations would be simply to select a product uniformly at random from $v$'s multiset of past purchases.
In this way popular items have a higher probability of being recommended.
But we want to take a more general approach by allowing time-dependent weights.
For instance, we would like to be able to deal with a situation in which, say, only the ten most recent purchases matter, or where the influence of purchases fades away with time in a specific way, and so on so forth.
To this end, focus on a vertex $u$ and recall that we allow users to have made $k_u$ purchases before starting the system.
Suppose now that we are at time $t>0$, and let $I_u^t \subseteq [t-1]$ denote the set of times when $u$ previously made a purchase.
We denote the weight of a purchase made by $u$ at time $i<t$ as $h^{t,i}_u$. First, we assume these weights to be normalised, that is, for all $t>0$,
$$
\sum_{i=1}^{k_u}  h^{t,-i}_u + \sum_{i \in I_u^t} h^{t,i}_u = 1,
$$
For a given product $p$, let $J_{u,p}^t \subseteq \{-k_u, \ldots, -1 \} \cup I_u^t$ be the set of times when $u$ bought $p$. Then, given that $u$ is the recommender node at time $t$, product $p$ will be recommended by $u$ and purchased by the buyer with probability
$$
x^t_u(p) := \sum_{i \in J_{u,p}^t} h^{t,i}_u.
$$
This formalization captures a large spectrum of natural scenarios.
For example, if the recommendation consists in picking from $u$'s past purchases uniformly at random, this amounts to setting $h_u^{t,-i} = h_u^{t,i} = \nicefrac{1}{k_u+|I_{u}^t|}$ for all $i$.
Picking one at random among the $10$ most recent purchases corresponds instead to setting $h_u^{t,i} = \nicefrac{1}{10}$ for all $i$ such that $|I_{u}^i| \ge |I_{u}^t| - 10$ and $h_u^{t,i} = 0$ otherwise.
Note that the values of the weights $h_u^{t,i}$ depend on the outcome of the process, thus the $h_u^{t,i}$ are random variables themselves.

While our theorems will be given for the full model, it may help the intuition to consider the following simplified version, to which we will sometime refer in the remainder of the paper.
\begin{definition}\label{def:bs}
{\sc [The Basic Scenario]}
The special case where users buy at the same rate (i.e.\ $f_u = \nicefrac 1n, u \in {\cal U}$), follow recommendations with the same probability (i.e.\ $\alpha_u = \alpha, u \in {\cal U}$), pick the recommender with uniform probability (i.e.\ $w_{vu} = \nicefrac 1{\operatorname{indeg}(u)}, u \in {\cal U}$), and recommend items by picking a random product from their list with uniform probability (i.e.\ $h_u^{t,-i} = h_u^{t,i} = \nicefrac{1}{k_u+|I_{u}^t|}, u \in {\cal U}$), is called the {\em basic scenario}. 
\end{definition}

\noindent
{\bf Remarks.}
Our model only considers products that can be bought multiple times.
Food items are a natural example, but there are very many cultural products that fall into this category: going out to a specific club, attending sport events, or participating to concerts at specific venues are just a few examples.
The case of items for which multiple purchases make little sense, such as books or movies, is also of great interest but is left for future work. 
It is worth noting that our model can capture to a certain extent \emph{repeat consumption}, the empirical principle that people tend to re-purchase a product they recently bought~\cite{Anderson&2014,Benson&2016}.
More precisely, our model can mimic the one in~\cite{Anderson&2014}, where a user buys a product that is either new or drawn from a distribution over his/her purchasing history.
We can mimic this model by allowing self-loops in the graph: a user can then pick itself as the recommender node and, with properly defined weights, re-purchase a product he bought in the near past.

\subsection{Analysis of the model}\label{s:an}
To analyse the evolution of the process we focus, from now on, on a particular product $p^* \in \mathcal{P}$, and look at all the others as aggregated into a second product. 

Let us begin with some necessary notation.
Recall that $x_u^t(p)$ is the probability with which $p$ is recommended and bought given that $u$ is the recommender node at time $t$. To simplify notation, let $x_u^t(p^*)$ be denoted simply as $x_u^t$. Similarly, let $b_u$ be $u$'s personal preference for $p^*$, and let $\bm{b} = (b_1,\ldots,b_n)$.  We want to study the evolution of $\bm{x}^t := (x_1^t, \ldots, x_n^t)$ as $t$ grows.
We define  $\A = \operatorname{diag}(\alpha_1, \ldots, \alpha_n)$ and
denote by $\M$ the weighted transposed adjacency matrix of $G$, i.e.\ $\Mthin_{uv}=w_{vu}$ (which is $0$ if $vu \notin G$).
For any random vector $\bm{y} = (y_1, y_2, \ldots)$ we denote by $\E[\bm{y}]$ the vector $(\E[y_1], \E[y_2], \ldots)$ of the expectations of its components.

With the notation pinned down, we can now address our main question: does the market share of $p^*$ converge? 
The theorem below states that this happens if the weights $h_u^{t,i}$ satisfy the following condition.
\begin{itemize}\itemsep0pt
 \item[] \textit{The past fades away}, {\em i.e.} $\lim_{t\to \infty} \E[h_u^{t,i}] = 0$. 
\end{itemize}
This condition is very natural. It states that past purchases will eventually be forgotten. For instance it is satisfied when the recommended product is chosen uniformly at random or when only the last few purchases have non-zero weight.
We can now give our main result, which is proven in the appendix.
\begin{theorem}
\label{thm:convergence}
If the past fades away, then
\begin{align}
\lim_{t\to \infty} \E[\bm{x}^t] = \bm{x}^{\infty}
\end{align}
where $\bm{x}^{\infty} = (\I - \A\M)^{-1}(\I-\A)\bm{b}$.
Otherwise $\E[\bm{x}^t]$ might not converge, or converge but not to $\bm{x}^{\infty}$.
\end{theorem}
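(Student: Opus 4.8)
The plan is to identify the deterministic fixed point that $\E[\bm{x}^t]$ must approach and then establish convergence by a contraction argument, the only delicate point being the correlation between the time‑dependent weights and the purchase outcomes. First I would read off the one‑step dynamics: conditioned on the history up to time $t-1$, if $u$ is the buyer then it purchases $p^*$ with probability $q_u^t := (1-\alpha_u)b_u + \alpha_u\sum_v w_{vu}x_v^t$, i.e.\ $\bm{q}^t = (\I-\A)\bm{b} + \A\M\bm{x}^t$. A vector $\bm{x}$ is therefore stationary exactly when $\bm{x} = (\I-\A)\bm{b}+\A\M\bm{x}$. Since $\M$ has row sums at most one (the weights entering each vertex sum to one) and $\A=\operatorname{diag}(\alpha_u)$ with every $\alpha_u<1$, we have $\|\A\M\|_\infty\le\max_u\alpha_u<1$; hence $\I-\A\M$ is invertible (Neumann series) and its unique fixed point is precisely $\bm{x}^{\infty}=(\I-\A\M)^{-1}(\I-\A)\bm{b}$. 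So it suffices to prove $\E[\bm{x}^t]\to\bm{x}^{\infty}$.

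Next I would expand $x_u^t=\sum_i h_u^{t,i}\xi_u^i$, where $\xi_u^i\in\{0,1\}$ records whether $u$'s $i$‑th purchase was $p^*$ and the sum runs over $u$'s historical and in‑process purchases. The $k_u$ historical terms contribute $\sum_j \xi_u^{-j}\E[h_u^{t,-j}]$, which vanishes as $t\to\infty$ by the fading hypothesis. For the in‑process terms one wants to replace $\xi_u^i$ by its conditional mean $q_u^i$, and here lies the main obstacle: $h_u^{t,i}$ is measurable with respect to the whole history up to $t-1$ and is thus correlated with $\xi_u^i$ and, more severely, with the later $x_v^i$, so the relevant expectations do not factor. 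I would resolve this by conditioning on the buyer process $\mathcal{G}$, the sequence recording who buys at each step. This process is exogenous (each buyer is drawn from $\{f_u\}$ independently of all products), it determines the weights $h_u^{t,i}$ (which, as in all the stated examples, depend only on purchase times), and given $\mathcal{G}$ the product at time $i$ is still $p^*$ with conditional probability $q_u^i$. Writing $\hat{\bm{x}}^t:=\E[\bm{x}^t\mid\mathcal{G}]$ and using that the indicator $\bm{1}\{u_i=u\}$ and the weights are now deterministic, one obtains the linear recursion $\hat x_u^t=\hat r_u^t+\sum_i c_u^{t,i}\,\hat q_u^i$ with $c_u^{t,i}=h_u^{t,i}\bm{1}\{u_i=u\}\ge 0$, $\sum_i c_u^{t,i}\le 1$, and $\hat q_u^i=((\I-\A)\bm{b})_u+\alpha_u(\M\hat{\bm{x}}^i)_u$.

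Finally I would prove convergence pathwise in $\mathcal{G}$ by a $\limsup$ contraction. Subtracting the fixed point and setting $\hat{\bm e}^t=\hat{\bm{x}}^t-\bm{x}^{\infty}$ yields $|\hat e_u^t|\le|\hat\delta_u^t|+\alpha_u\sum_i c_u^{t,i}\|\hat{\bm e}^i\|_\infty$, where $\hat\delta_u^t\to 0$ collects the fading history term and the $O(\sum_j h_u^{t,-j})$ slack coming from $\sum_i c_u^{t,i}\le 1$, and where I used $|(\M\hat{\bm e}^i)_u|\le\|\hat{\bm e}^i\|_\infty$. Because the weights on any fixed time fade, splitting the sum at a large $T$ gives $\limsup_t\sum_i c_u^{t,i}\|\hat{\bm e}^i\|_\infty\le\limsup_i\|\hat{\bm e}^i\|_\infty=:L$, whence $L\le(\max_u\alpha_u)L$ and thus $L=0$; so $\hat{\bm{x}}^t\to\bm{x}^{\infty}$. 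As $\hat{\bm{x}}^t\in[0,1]^n$, bounded convergence upgrades this to $\E[\bm{x}^t]=\E[\hat{\bm{x}}^t]\to\bm{x}^{\infty}$. The converse (``otherwise'') clause I would settle with two small counterexamples violating fading: weights frozen on an initial purchase keep $\E[\bm{x}^t]$ at its starting value, generically $\neq\bm{x}^{\infty}$, while periodically oscillating weights prevent convergence altogether.
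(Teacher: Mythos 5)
Your route is genuinely different from the paper's, and its skeleton is sound: the fixed-point identification via the Neumann series, the reduction of the correlation problem by conditioning on the exogenous buyer process $\mathcal{G}$, and the two counterexamples for the ``otherwise'' clause all work (the paper's counterexamples are essentially yours). The paper never conditions on the whole buyer sequence; instead it introduces the indicators $\psi_u^i$ ($u$ buys at step $i$) and $\chi_u^i$ ($u$ buys $p^*$ at step $i$) and uses the factorization $\E[h_u^{t,i}\chi_u^i]=\E[h_u^{t,i}\psi_u^i]\,\E[\chi_u^i\mid \psi_u^i=1]$ --- valid for exactly the reason you invoke, namely that the weights depend on purchase times and not on which products were bought --- to derive a linear recursion directly for the unconditional expectations $\E[\x^t]$, which it then closes with a bootstrap that improves a uniform bound on $|\E[\x^t]-\xlim|$ by a factor $\frac{1+\alpha}{2}$ in each round.

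However, your contraction step has a genuine gap relative to the theorem as stated. The $\limsup$ argument is pathwise in $\mathcal{G}$, so it needs pathwise fading: $\sum_j h_u^{t,-j}\to 0$ and, for each fixed $T$, $\sum_{i\le T} c_u^{t,i}\to 0$ almost surely given $\mathcal{G}$. The hypothesis of the theorem is only $\lim_t \E[h_u^{t,i}]=0$, i.e.\ $L^1$ convergence, which implies convergence in probability but not almost sure convergence: one can build $\mathcal{G}$-measurable weights (e.g.\ putting mass $\nicefrac{1}{2}$ on the initial purchase whenever the $\lceil \log\log t\rceil$ most recent buyers all coincide) for which the expected weight vanishes yet, by Borel--Cantelli, the weight equals $\nicefrac{1}{2}$ infinitely often almost surely. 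So as written you prove the theorem under a strictly stronger hypothesis than the stated one. The gap cannot be repaired simply by taking expectations of your pathwise inequality: $c_u^{t,i}$ and $\lVert\hat{\bm{e}}^i\rVert_\infty$ are both $\mathcal{G}$-measurable and correlated, so $\E[c_u^{t,i}\lVert\hat{\bm{e}}^i\rVert_\infty]$ does not factor, and your bootstrap cannot be iterated (after one round you hold only a bound in expectation, while the next round needs a pathwise bound); subsequence tricks also fail, because the recursion at time $t$ involves all earlier times. This is precisely what the paper's factorization buys: conditioning only on the event $\psi_u^i=1$, rather than on all of $\mathcal{G}$, produces deterministic coefficients $\E[h_u^{t,i}\psi_u^i]\le \E[h_u^{t,i}]$ that vanish under the stated hypothesis, while the error terms $\E[\x^i]-\xlim$ are deterministic vectors, so the products factor and the induction closes in expectation.
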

Let us discuss a few interesting consequences of this result.

First, notice that since the network and the values $\bm{\alpha} = (\alpha_1, \ldots, \alpha_n), \bm{f} = (f_1,\ldots,f_n)$, and $\bm{b}$, stay put and only $\bm{x}^t$ changes, the theorem gives a complete characterization of the market at the limit. 
The limit vector $\bm{x}^{\infty}$ has a natural and quite useful interpretation, given in the next corollary.
\begin{definition}\label{def:lms}
The fraction of times that a user $u$ has bought $p^*$ at time $t \in \mathbb{N}^+ \cup \{\infty\}$ is called the {\em local market share (LMS)} of $p^*$ w.r.t. $u$ at time $t$.
\end{definition}
\begin{corollary}\label{cor:1}
For every user $u$, $x^{\infty}_u$ is the local market share of $p^*$ w.r.t. $u$ at the limit.
\end{corollary}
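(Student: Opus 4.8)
The plan is to reduce the local market share to a running average of single-step purchase probabilities and then identify its limit with $x^{\infty}_u$ via the fixed-point equation behind Theorem~\ref{thm:convergence}. The first observation is that the purchasing process prescribes, for a user $u$ that is selected as buyer at time $t$, the conditional probability that it buys $p^*$ given the whole history: with probability $1-\alpha_u$ it draws from $B_u$, contributing $b_u$, and with probability $\alpha_u$ it consults a recommender $v$ (picked with probability $w_{vu}=\Mthin_{uv}$) who recommends $p^*$ with probability $x^t_v$. Hence this probability is
\begin{equation}
q^t_u := (1-\alpha_u)\,b_u + \alpha_u\,(\M\bm{x}^t)_u .
\end{equation}
Taking expectations and letting $t\to\infty$, Theorem~\ref{thm:convergence} gives $\E[q^t_u]\to(1-\alpha_u)b_u+\alpha_u(\M\xlim)_u$. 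But $\xlim$ solves $(\I-\A\M)\xlim=(\I-\A)\bm{b}$, whose $u$-th row is exactly $x^{\infty}_u=(1-\alpha_u)b_u+\alpha_u(\M\xlim)_u$; thus $\E[q^t_u]\to x^{\infty}_u$, i.e.\ the limiting expected probability that $u$ buys $p^*$ when it is the buyer equals $x^{\infty}_u$.

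Second, I would express the LMS through a martingale (compensator) decomposition. Writing $N^t_u=k_u+|I^t_u|$ for the number of purchases made by $u$ up to time $t$ and $S^t_u=|J^t_{u,p^*}|$ for those equal to $p^*$, the LMS of $u$ is $S^t_u/N^t_u$. At each step $s$ the indicator that $u$ is selected and buys $p^*$ has conditional mean $f_u\,q^s_u$, while the indicator that $u$ is merely selected has conditional mean $f_u$. Compensating these two counting processes yields bounded-increment martingales, so almost surely $N^t_u=f_u t+o(t)$ and $S^t_u=f_u\sum_{s\le t}q^s_u+o(t)$, where the initial $k_u$ terms are absorbed into the $o(t)$ since $N^t_u\to\infty$. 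Dividing, the LMS is asymptotically the Ces\`aro average $\frac1t\sum_{s\le t}q^s_u$.

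The final step is to show this Ces\`aro average tends to $x^{\infty}_u$, and this is where I expect the real difficulty to lie. At the level of expectations it is immediate: $\E\!\big[\frac1t\sum_{s\le t}q^s_u\big]=\frac1t\sum_{s\le t}\E[q^s_u]\to x^{\infty}_u$ by Ces\`aro applied to the convergent sequence $\E[q^s_u]$, which already yields convergence of the expected LMS. Upgrading this to almost-sure (or in-probability) convergence of the ratio $S^t_u/N^t_u$ requires controlling the fluctuations of $q^s_u=(1-\alpha_u)b_u+\alpha_u(\M\bm{x}^s)_u$ around its mean, i.e.\ strengthening Theorem~\ref{thm:convergence} from convergence of $\E[\bm{x}^t]$ to concentration of $\bm{x}^t$ about $\xlim$. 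The obstacles are thus the usual ones for such ratios: the per-step indicators are neither independent nor identically distributed, the compensator itself is random through $\bm{x}^s$, and one must ensure that the $o(t)$ error terms in numerator and denominator vanish uniformly relative to the $\Theta(t)$ main terms; these are handled by combining the martingale strong law with the concentration of $\bm{x}^t$ and of $N^t_u$ around $f_u t$.
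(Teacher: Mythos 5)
Your proposal is correct in the sense the paper intends, but it takes a genuinely different --- and substantially more complete --- route than the paper's own argument. The paper's proof is a two-line observation restricted to the basic scenario: there the uniform weights make $x^t_u$ \emph{identically equal} to the local market share $S^t_u/N^t_u$, so Theorem~\ref{thm:convergence} applies verbatim; the general case is declared ``slightly technical'' and omitted. Your argument is precisely that missing general case: when the weights $h^{t,i}_u$ are not uniform (e.g.\ only the last ten purchases matter), $x^t_u$ and the LMS are \emph{different} random variables, and your compensator decomposition $S^t_u = f_u\sum_{s\le t} q^s_u + o(t)$, $N^t_u = f_u t + o(t)$ (the first via the martingale strong law with bounded increments, the second by the ordinary SLLN since the buyer selections are i.i.d.), combined with the fixed-point identity $x^{\infty}_u = (1-\alpha_u)b_u + \alpha_u(\M\xlim)_u$, is exactly the bridge needed to transfer convergence of $\E[\x^t]$ (Theorem~\ref{thm:convergence}) to convergence of the expected LMS. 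Two remarks. First, make the final inference explicit: from the almost-sure asymptotic equivalence of $S^t_u/N^t_u$ and $\frac{1}{t}\sum_{s\le t}q^s_u$, both of which lie in $[0,1]$, you need dominated convergence to pass to expectations and conclude $\E\big[S^t_u/N^t_u\big]\to x^{\infty}_u$; as written you jump from the Ces\`aro limit of the expectations to ``convergence of the expected LMS''. Second, the difficulty you flag at the end is real but is not a gap relative to the paper: Theorem~\ref{thm:convergence} only provides convergence in expectation, the corollary is read by the authors in that sense, and the paper itself states in the experimental section that convergence of the LMS in high probability is an open problem --- so you should simply scope your claim to expectations rather than attempt to prove concentration of $\x^t$ about $\xlim$.
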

The proof of this corollary in the general case is slightly technical but in the basic scenario (see Definition~\ref{def:bs}) follows immediately from Theorem~\ref{thm:convergence}.  In this scenario, by definition of the model, $x_u^t$ is the fraction of times that $u$ has bought $p^*$ up to time $t$, {\em i.e.} it is $p^*$'s LMS w.r.t. $u$, and Theorem~\ref{thm:convergence} states that this value converges to $x^{\infty}_u$. 

Consider now the matrix
\begin{equation}\label{eq:L}
 \bm{L} := (\I-\A\M)^{-1} (\I - \A)
\end{equation}
This matrix contains a lot of information about the steady state. Now, since $\bm{x}^{\infty} = \bm{L b}$ we have that $x^{\infty}_u = \sum_v L_{uv} b_v$. We can think of $x^{\infty}_u$ as consisting of a set of ``slices'', each of size $L_{uv} b_v$, $v \in {\cal U}$. Each slice is the contribution of $v$ to $x^{\infty}_u$ which is, as we have seen, $p^*$'s LMS w.r.t $u$ at the limit. We thus have a measure of how much each user can influence every other user.  

If $L_{vu}$ is how much $u$ influences $v$ then $\sum_v  L_{vu}$ quantifies $u$'s influence on the entire market. 
We can collect  these individual market influences in an \textit{influence vector}:
\begin{align}
\label{eqn:bgamma}
 \bm{\gamma}^T := \boldf^T \bL
\end{align}
where $\boldf = [f_1, \ldots, f_n]$.
The vector $\bgamma$ is stochastic (its entries sum up to one), and its element $\gamma_u$ is the relative weight of $u$'s personal distribution in the overall sales distribution at equilibrium. The larger the value of $\gamma_u$ the greater the influence of $u$ on the final market share of $p^*$.
We thus define the {\em maximum influence} of any user as,
\begin{align}
\label{eqn:gamma}
 \gamma(G) := \max_{u \in G} \gamma_u = \lVert \bgamma \rVert_{\infty}
\end{align}
For instance, in the basic scenario, when $G$ is a star $\gamma_u = \Theta(1)$ if $u$ is the centre and $\gamma_u = \Theta(\nicefrac 1n)$ otherwise, so that $\gamma(G)=\Theta(1)$, while when $G$ is a clique $\gamma(G)=\gamma_u = \nicefrac 1n$ for every $u$. 
 
Thanks to Theorem~\ref{thm:convergence} we can now easily answer our motivating question-- how much can the market be distorted? By combining $\bgamma$ and the vector $\bm{b}$ of personal preferences we can quantify how much the recommender system distorts (i.e.\ increases or decreases) the overall market share of $p^*$. 

We define the \textit{market distortion} as:
\begin{equation}\label{eq:D}
 \Delta := \frac{\boldf \cdot \bm{x}^{\infty}}{\boldf \cdot \bb} = \frac{\bgamma \cdot  \bb}{\boldf \cdot \bb},
\end{equation}
where $\cdot$ denotes the dot product. This is the ratio between the market share of $p^*$ at the limit and that without the recommender. 

\paragraph{Influence, PageRank and Shapley Values}
The influence vector $\bm{\gamma}$ has a natural interpretation in terms of personalized PageRank.
Consider the graph $G^T$ obtained by transposing $G$, the graph among the users, and suppose it has no dangling nodes (i.e.\ no node with outdegree zero).
Let $\bm{W}$ be the normalised adjacency matrix of $G^T$, i.e.\ $W_{uv} = \nicefrac{1}{\operatorname{outdeg}(u)}$ with the outdegree measured in $G^T$.
Finally, pick any $\alpha \in [0,1)$ and any $n$-entry stochastic vector $\bm{r}$.
The personalized PageRank vector $\bm{p}$ with respect to $\bm{r}$ with damping factor $\alpha$ on $G^T$ is defined as (see e.g.~\cite{LM04}):
\begin{equation}
 \bm{p}^T = \bm{r}^T (\I - \alpha \bm{W})^{-1} (1 - \alpha)
\end{equation}
Compare the above expression with that of $\bgamma^T$ given by Equation~\ref{eqn:bgamma} and the definition of $\bm{L}$:
\begin{equation}
 \bgamma^T = \boldf^T (\I-\A\M)^{-1} (\I - \A)
\end{equation}
The vector $\bgamma^T$ is the perfect analogue of $\bm{p}^T$.
Indeed, the two coincide by choosing $\bm{r} = \boldf$ if each node $u$ in $G$ has $\alpha_u = \alpha$ and has a nonempty set of incoming arcs all with the same weight.
Our notion of influence is thus a generalization of the standard personalized PageRank, allowing for personalized values of $\alpha$, graphs with dangling nodes, and arbitrary arc weights.
In fact, our model offers an intuitive interpretation similar to that of PageRank: the influence of nodes can be seen as the stationary distribution of a ``reverse'' random walk on $G$ which at each step, from the current node $u$, either moves to an incoming neighbour $v$ with probability $\alpha_u w_{vu}$ or, with probability $1-\alpha_u$, moves to a random node in $G$ chosen according to the distribution $\{f_v\}_{v \in G}$.
This walk can be seen as propagating the ``credit'' for the choice of a purchased product back to the incoming neighbours, the neighbours of the incoming neighbours and so on, recursively.
The purchasing rates play the intuitive role of the personalization vector: the more a user buys, the more the users who influence him are influential overall (and they are indeed reached more often by the random walk).

The influence vector $\bm{\gamma}$ also has an interpretation in terms of the classical notion of Shapley value, which is a mechanism to determine payoffs in coalition games. Shapley values have been proposed as a mechanism to determine fair compensation for user's participation to recommendation systems \cite{KPR2001}. Let $b_u$ denote, as usual, $u$'s personal preference for $p^*$.
Consider now the following game: each user can either participate to the market with $b_u$ or set its preference to $0$; the value assigned to the coalition is the total market share of $p^*$.
Then, the following holds.
\begin{fact}
The Shapley value of each user $u$ in the above game is exactly $\gamma_u b_u$.
\end{fact}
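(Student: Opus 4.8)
The plan is to show that this coalition game is \emph{additive} (inessential), after which the Shapley value collapses to each player's stand-alone worth. The whole argument rests on a single structural observation about the limit theorem, and everything else is the standard inessential-game computation.

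First I would pin down the characteristic function. For a coalition $S \subseteq \mathcal{U}$, let $\bb^{(S)}$ be the preference vector obtained from $\bb$ by zeroing out every coordinate outside $S$, i.e.\ $b^{(S)}_v = b_v$ for $v \in S$ and $b^{(S)}_v = 0$ otherwise. Using the interpretation of $\boldf \cdot \bm{x}^{\infty}$ as the total market share of $p^*$ at the limit, together with the identity $\boldf \cdot \bm{x}^{\infty} = \bgamma \cdot \bb$ that follows from $\bm{x}^{\infty} = \bL\bb$ and $\bgamma^T = \boldf^T \bL$, the value of the coalition $S$ is
$$v(S) = \left.\boldf \cdot \bm{x}^{\infty}\right|_{\bb = \bb^{(S)}} = \bgamma \cdot \bb^{(S)} = \sum_{u \in S} \gamma_u\, b_u.$$
The crucial point, and the only step that requires care, is that the influence vector $\bgamma^T = \boldf^T (\I - \A\M)^{-1}(\I - \A)$ does \emph{not} depend on $\bb$: it is a function of the graph, of $\boldf$, and of $\A$ alone. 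Hence replacing $\bb$ by $\bb^{(S)}$ leaves $\bgamma$ unchanged, and the total market share is genuinely \emph{linear} in the preference vector. A priori one might worry that removing users reshapes the equilibrium so as to reshuffle the coefficients $\gamma_u$; Theorem~\ref{thm:convergence} guarantees this does not happen, which is exactly what makes the game additive.

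Given additivity, the marginal contribution of $u$ to any coalition $S \not\ni u$ is constant,
$$v(S \cup \{u\}) - v(S) = \gamma_u\, b_u,$$
and substituting into the Shapley formula gives
$$\phi_u = \sum_{S \subseteq \mathcal{U}\setminus\{u\}} \frac{|S|!\,(n-|S|-1)!}{n!}\bigl(v(S\cup\{u\}) - v(S)\bigr) = \gamma_u\, b_u \sum_{S \subseteq \mathcal{U}\setminus\{u\}} \frac{|S|!\,(n-|S|-1)!}{n!}.$$
It remains to note that the Shapley weights sum to $1$: grouping the subsets $S \subseteq \mathcal{U}\setminus\{u\}$ by cardinality $s$, the $\binom{n-1}{s}$ sets of size $s$ each carry weight $\frac{s!\,(n-s-1)!}{n!}$, contributing $\frac{1}{n}$ in total, and summing over $s = 0, \ldots, n-1$ yields $1$. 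Therefore $\phi_u = \gamma_u\, b_u$, as claimed. Equivalently, I could invoke the additivity axiom directly: writing $v$ as the sum over users of the scaled unanimity game on the singleton $\{u\}$ with weight $\gamma_u b_u$, whose Shapley value assigns $\gamma_u b_u$ to $u$ and $0$ to all others, the same conclusion follows. The main obstacle is thus purely conceptual rather than computational, namely certifying that $\bgamma$ is independent of $\bb$ so that $v$ is additive; once this is in hand the payoff formula is immediate.
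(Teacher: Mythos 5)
Your proof is correct. The paper states this Fact without any proof, and your argument---observing that $\bL$, hence $\bgamma$, does not depend on $\bb$, so that $v(S)=\sum_{u\in S}\gamma_u b_u$ is an additive (inessential) game whose Shapley value collapses to each player's stand-alone worth $\gamma_u b_u$---is exactly the justification the paper's presentation implicitly relies on, with the weight-sum computation and the unanimity-game decomposition both valid ways to finish.
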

It is worth pointing out that, uncharacteristically, this Shapley value can be computed in polynomial-time.

\paragraph{Speed of convergence}
Theorem~\ref{thm:convergence} does not tell how much time it takes for the market to reach the steady-state; in fact, this depends on how fast the weights decay with $t$.
In the case of the simple uniform random recommendation, one can prove that the system converges to the steady state reasonably fast:
\begin{theorem}
\label{thm:rate_uniform}
 If the recommender node recommends an item picked at random from its purchasing history, then
\begin{align}
 \lVert \E[\x^t] -  \xlim \rVert = O\big(t^{\alpha - 1}\big)
\end{align}
where $\alpha$ is the maximum $\alpha_u$ over all users.
\end{theorem}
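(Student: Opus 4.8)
\emph{The one-step recursion.} In the uniform-recommendation scenario $x_u^t = c_u^t / d_u^t$, where $d_u^t = k_u + |I_u^t|$ is the number of purchases made by $u$ before time $t$ and $c_u^t$ is how many of them were $p^*$. I would first derive the evolution of $\E[\x^t]$. At each step only the buyer's coordinate changes, and the buyer is $u$ with probability $f_u$; conditioning on the history $\mathcal F_t$ and on $u$ being the buyer gives $\E[x_u^{t+1}\mid\mathcal F_t,\text{buyer}=u]=(d_u^t x_u^t + P_u^t)/(d_u^t+1)$, where $P_u^t$, the $u$-th entry of $\bm P^t:=(\I-\A)\bb+\A\M\x^t$, is the probability the purchase is $p^*$. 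Averaging over the buyer gives the vector recursion
\begin{equation*}
\E[\x^{t+1}\mid\mathcal F_t]=\x^t+\bm D_t^{-1}\bm F\,(\bm P^t-\x^t),\qquad \bm D_t=\operatorname{diag}(d_u^t+1),\ \bm F=\operatorname{diag}(f_u).
\end{equation*}
Since the uniform case satisfies \emph{the past fades away}, $\xlim$ is as in Theorem~\ref{thm:convergence}, so $(\I-\A)\bb=(\I-\A\M)\xlim$; writing $\bm e^t:=\x^t-\xlim$ this yields $\bm P^t-\x^t=-(\I-\A\M)\bm e^t$, hence
\begin{equation*}
\E[\bm e^{t+1}\mid\mathcal F_t]=\big(\I-\bm D_t^{-1}\bm F(\I-\A\M)\big)\,\bm e^t .
\end{equation*}

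\emph{The deterministic-step heart of the argument.} The effective step $f_u/(d_u^t+1)$ concentrates around $1/t$, since $d_u^t=k_u+\mathrm{Bin}(t-1,f_u)$ concentrates at $f_u t$. Pretending $\bm D_t^{-1}\bm F=\tfrac1t\I$ exactly, I would bound the transition matrix in the $\infty$-norm. As $\M$ is row-stochastic ($\sum_v M_{uv}=\sum_v w_{vu}=1$, and rows without in-edges have $\alpha_u=0$) and $\A=\operatorname{diag}(\alpha_u)$ with $\alpha_u\le\alpha$, we get $\|\A\M\|_\infty\le\alpha$, so for $s\ge1$
\begin{equation*}
\Big\|\I-\tfrac1s(\I-\A\M)\Big\|_\infty=\Big\|(1-\tfrac1s)\I+\tfrac1s\A\M\Big\|_\infty\le 1-\tfrac{1-\alpha}{s}.
\end{equation*}
Telescoping, the state-transition matrix $\bm\Phi(t,s):=\prod_{r=s}^{t-1}(\I-\tfrac1r(\I-\A\M))$ obeys $\|\bm\Phi(t,s)\|_\infty\le\prod_{r=s}^{t-1}(1-\tfrac{1-\alpha}{r})=O\big((s/t)^{1-\alpha}\big)$, whence $\|\bm\Phi(t,t_0)\bm e^{t_0}\|_\infty=O(t^{\alpha-1})$. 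This is exactly where the exponent $\alpha-1$ originates; the same rate is visible spectrally, since $\rho(\A\M)\le\alpha$ forces every eigenvalue of $\I-\A\M$ to have real part $\ge1-\alpha$.

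\emph{Handling the random step size.} The real obstacle is that $\bm D_t^{-1}\bm F$ is random and correlated with $\bm e^t$. I would write $\bm D_t^{-1}\bm F=\tfrac1t\I+\bm R_t$ and take full expectations, obtaining a perturbed linear recursion
\begin{equation*}
\bm\epsilon^{t+1}=\big(\I-\tfrac1t(\I-\A\M)\big)\bm\epsilon^{t}+\bm\eta^{t},\qquad \bm\epsilon^t:=\E[\bm e^t],\quad \bm\eta^t:=-\,\E\big[\bm R_t(\I-\A\M)\bm e^t\big],
\end{equation*}
solved by variation of constants, $\bm\epsilon^t=\bm\Phi(t,t_0)\bm\epsilon^{t_0}+\sum_{s=t_0}^{t-1}\bm\Phi(t,s+1)\bm\eta^s$. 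The first term is $O(t^{\alpha-1})$ by the previous step; for the sum I need $\sum_s (s/t)^{1-\alpha}\|\bm\eta^s\|=O(t^{\alpha-1})$, i.e.\ essentially $\|\bm\eta^s\|=O(s^{-2})$.

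\emph{The main technical estimate, and the obstacle.} Bounding $\bm\eta^s$ is the crux. A Chernoff bound on $d_u^s$ gives $\E[(\bm R_s)_{uu}]=\E[f_u/(d_u^s+1)]-\tfrac1s=O(s^{-2})$ and $\mathrm{std}((\bm R_s)_{uu})=O(s^{-3/2})$, while rare large deviations contribute only $e^{-\Omega(s)}$ because every $x_u^t\in[0,1]$ keeps the integrand bounded. Writing $\eta_u^s=\E[(\bm R_s)_{uu}\,((\I-\A\M)\bm e^s)_u]$, this splits into the product of means, which is $O(s^{-2})$, plus a covariance; I would bound the latter by $O(s^{-3/2})\cdot O(s^{-1/2})=O(s^{-2})$ via Cauchy--Schwarz, \emph{provided} the local market share concentrates as $\operatorname{Var}(x_u^s)=O(1/s)$. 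This gives $\|\bm\eta^s\|=O(s^{-2})$ and closes the bound for every $\alpha\in(0,1)$. I expect the hardest single step to be exactly proving $\operatorname{Var}(x_u^s)=O(1/s)$ uniformly in $u$: the coordinates are coupled through $\M$, so this needs either a coupled bounded-differences/martingale argument or a second-moment recursion run alongside the first-moment one above. Conditioning on the buyer sequence, which renders each $d_u^s$ deterministic and decouples the count randomness from the product randomness, is the device I would use to make all of these estimates rigorous.
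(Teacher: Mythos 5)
First, a point of fact: the paper contains no proof of Theorem~\ref{thm:rate_uniform} --- the text says ``the proof is omitted from this extended abstract,'' and the appendix proves only Theorem~\ref{thm:convergence}. So there is no authors' argument to compare yours against; what follows judges your proposal on its own terms.

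Your skeleton is the natural stochastic-approximation route and its deterministic core is correct: the conditional recursion $\E[\bm{e}^{t+1}\mid\mathcal{F}_t]=\bigl(\I-\bm{D}_t^{-1}\bm{F}(\I-\A\M)\bigr)\bm{e}^t$ is right; $\lVert\A\M\rVert_\infty\le\alpha$ does give $\bigl\lVert\I-\tfrac1s(\I-\A\M)\bigr\rVert_\infty\le 1-\tfrac{1-\alpha}{s}$; the product bound $O\bigl((s/t)^{1-\alpha}\bigr)$ is exactly where the exponent $\alpha-1$ comes from; and variation of constants with the summability condition $\sum_s s^{1-\alpha}\lVert\bm{\eta}^s\rVert<\infty$ is the right way to absorb the random step sizes. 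The estimates $\E[f_u/(d_u^s+1)]-\tfrac1s=O(s^{-2})$ and $\operatorname{std}\bigl(f_u/(d_u^s+1)\bigr)=O(s^{-3/2})$ are also correct, with constants depending on $k_u$ and $1/f_u$, which is harmless since the network is fixed.

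The genuine gap is the concentration claim on which you rest the covariance bound: $\operatorname{Var}(x_u^s)=O(1/s)$ uniformly in the parameters is \emph{false} for $\alpha>1/2$. The process is a generalized Friedman/P\'olya urn with reinforcement $\alpha$ (when all $\alpha_u=\alpha$ and every node has incoming arcs, $\A\M\1=\alpha\1$, so $\alpha$ is genuinely an eigenvalue of the reinforcement operator), and the classical urn dichotomy applies: for $\alpha<1/2$ the composition fluctuations are of order $s^{-1/2}$ (CLT regime), but for $\alpha>1/2$ they are of order $s^{-(1-\alpha)}$, i.e.\ $\operatorname{Var}(x_u^s)=\Theta\bigl(s^{-2(1-\alpha)}\bigr)\gg 1/s$, with a logarithmic correction at $\alpha=1/2$. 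Fortunately, your argument does not need the bound you stated, because the summability requirement weakens as $\alpha$ grows: $\sum_s s^{1-\alpha}\cdot s^{-3/2}\cdot\operatorname{std}(x_u^s)<\infty$ only needs $\operatorname{Var}(x_u^s)=O(s^{-\delta})$ for some $\delta>1-2\alpha$. For $\alpha>1/2$ even the trivial bound $\operatorname{std}(x_u^s)\le 1$ suffices; for $\alpha\le 1/2$ the required decay is strictly weaker than the true order $\min\bigl(1,2(1-\alpha)\bigr)$, and it can be obtained by exactly the second-moment recursion you mention: the covariance of $\bm{e}^s$ contracts in norm by $1-\tfrac{2(1-\alpha)}{s}+O(s^{-2})$ per step and receives $O(s^{-2})$ of fresh martingale noise per step (only one coordinate moves, by $O(1/s)$), giving $O\bigl(s^{-\min(1,2(1-\alpha))}\bigr)$ up to logarithms. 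So: replace your variance lemma by ``$\operatorname{Var}(x_u^s)=O(s^{-\delta})$ for some $\delta>\max(0,1-2\alpha)$,'' prove it by that recursion (or invoke the trivial bound when $\alpha>1/2$), and your proof closes for every $\alpha\in(0,1)$.
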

The proof is omitted from this extended abstract.

\paragraph{Computing $\bgamma$}
Clearly, computing $\bgamma$ efficiently is a crucial task to the aims of the present paper.
Doing this directly through Equations~\ref{eq:L} and \ref{eqn:bgamma} requires to explicitly invert the matrix $(\I - \A\M)$, which can be prohibitively expensive is the inverse is dense (which is the case if $G$ is well-connected).
We instead follow a standard alternative approach.
Rewrite $(\I - \A\M)$ as $\sum_{i=0}^{+\infty} (\A \M)^{i}$, plug it into Equation~\ref{eqn:bgamma} and obtain:
\begin{align}
\label{eqn:gamma_series}
 \bgamma^T = \boldf^T \sum_{i=0}^{+\infty} (\A \M)^{i} (\I-\A)
\end{align}
Now, if $G$ is sparse, so is $\M$ and thus all the terms of the summation in Equation~\ref{eqn:gamma_series}.
We can then approximate $\bgamma$ efficiently by truncating the series, which only requires to keep trace of a sparse matrix and vectors.
To be more precise, computing all entries of $\bgamma$ up to an additive error $O(\epsilon)$ requires time $O\big((|V|+|E|) \log(\frac{1}{\epsilon})/\log(\frac{1}{\alpha})\big)$ and space $O(|V|+|E|)$, where $\alpha = \max_{u\in G}{\alpha_u}$.
We use this technique throughout our experiments.

\newcommand{\boldb}{\mathbf{b}}
\newcommand{\z}{\mathbf{z}}

\section{Experimental results}
\label{sec:experimental}
In this section we discuss a series of experiments where the underlying graph of our model is a real world social network. 

Theorem~\ref{thm:convergence} says that $\bm{x^t}$ converges to $\bm{x}^{\infty}$ in expectation. A first set of experiments shows that the expectation does captures what happens at the limit, {\em i.e.} the variance is small.
We then investigate the question of market distortion -- for a given product $p^*$, how much does its market share change at the limit?
A related question concerns so-called influencers. What is their influence and how much can they change the market?  We show that when the underlying network is a real world social network market distortion  is negligible and even ``celebrities'' cannot have  significant impact. Vice versa, there are other realistic situations, such as the presence of a super-node acting as a ``media tycoon'', where certain nodes can conquer a significant share of the market thanks to the recommendation mechanism. We also give a rule of thumb to identify top influencers on the basis of their 2-hop neighbourhood.
Finally, we explore recommenders not covered by our theoretical results, in which products are recommended with super-linear probabilities.
We find that the results vary widely from graph to graph, and depend also heavily on the starting share of products.

Let us now describe the experimental setup.

\subsection{Experimental setup}
\newcommand{\gplus}{\texttt{Google+}\xspace}
\newcommand{\tsnap}{\texttt{Twitter SNAP}\xspace}
\newcommand{\tlaw}{\texttt{Twitter LAW}\xspace}
\newcommand{\facebook}{\texttt{Facebook}\xspace}
\newcommand{\yelp}{\texttt{Yelp}\xspace}
\newcommand{\slashdot}{\texttt{Slashdot}\xspace}
We run our experiments on six following real-world graphs.
\begin{description}
	\item[\gplus] The ego-Gplus graph from SNAP Datasets~\cite{snap}. It contains $107,614$ nodes (users) and $13,673,453$ arcs representing a follower-to-followed relationship in the Google+ social network.
	\item[\tsnap] The ego-Twitter graph from SNAP Datasets. It contains $81,306$ nodes and $2,420,766$ follower-to-followed arcs crawled from Twitter.
	\item[\tlaw]
	The twitter-2010 graph by the Laboratory for Web Algorithmics, Univ.\ Milan~\cite{BoVWFI,BRSLLP}. It contains $41,652,230$ nodes and $1,468,365,182$ follower-to-followed arcs from Twitter. 
	\item[\slashdot] The soc-sign-Slashdot090221 graph from SNAP Datasets. It contains $82,144$ users and $549,202$ signed (i.e.\ ``friends'' or ``foes'') arcs between users. We kept only the $425,072$ ``friend'' arcs.
	\item[\yelp] A social graph extracted from friendships in the Yelp Dataset Challenge dataset.\footnote{\small\url{https://www.yelp.com/dataset_challenge/}} The graph is undirected and contains $365,759$ users and $1,288,031$ edges.\footnote{Actually, the original Yelp dataset contains $117$ spurious directed edges that are not reciprocated.}
	\item[\facebook] The graph from the ``List of links'' Facebook dataset by MPI-SWS~\cite{Viswanath&2009}.\footnote{\small\url{http://socialnetworks.mpi-sws.org/data-wosn2009.html}} The graph is undirected and contains $63,731$ users and $817,090$ edges.
\end{description}
When the graphs are directed we reverse the edges (except for \tlaw that was already oriented correctly). This is because in social graphs an edge $uv$ denotes the fact that $u$ ``trusts'' or ``follows'' $v$, while in our model the meaning is that $u$ influences $v$. 

Two kinds of experiments are carried out in the following: simulations and computations.
In the first, we simulate the purchasing process for $10,000 n$ steps and in the basic scenario (see Definition~\ref{def:bs}), unless otherwise specified.
The probability with which users follow a recommendation is set to $\alpha = 0.2$ for every user.
Note that this is a conservative estimate, since the available evidence suggests both the relative influence of friends' recommendations (see e.g.~\cite{Leskovec&2007,Ye&2012}) and the fraction of sales causally imputable to recommender systems (see~\cite{Sharma&2015}) to be significantly smaller.
We are thus intentionally putting the RS in favourable conditions to distort the market.

In computations, we use the adjacency matrix of the underlying graph and the other parameters of the model to compute explicitly the influence vector $\bgamma$ (and related quantities), using the techniques described at the end of Section~\ref{s:t}.

Our code was partly written in C++ using Eigen\footnote{\small\url{http://eigen.tuxfamily.org}} and the GNU Scientific Library\footnote{\small\url{http://www.gnu.org/software/gsl/}}, and partly in Java using the WebGraph framework\footnote{\small\url{http://webgraph.di.unimi.it/}}.
We ran it on a distributed-memory $50$-node cluster provided by CINECA\footnote{\small\url{http://www.cineca.it/en}}, where each node is equipped with two ten-core Intel Xeon CPUs and 128GiB of main memory.
The source code can be downloaded from {\url{https://github.com/Steven--/recommender}}.


\subsection{Convergence to the equilibrium}
The first experiment checks that the purchasing process unrolls as predicted by our theoretical results.
As in \S~\ref{s:an}, we focus on a single product $p^*$, with the remaining products coalesced into one. Figure~\ref{fig:simulations_linfinity_norm} shows that $\bm{x}^t$ quickly converges to $\bm{x}^{\infty}$ as $t$ grows (the plot follows a power-law). 
\begin{figure}[!h]
	\centering
	\includegraphics[scale=.6]{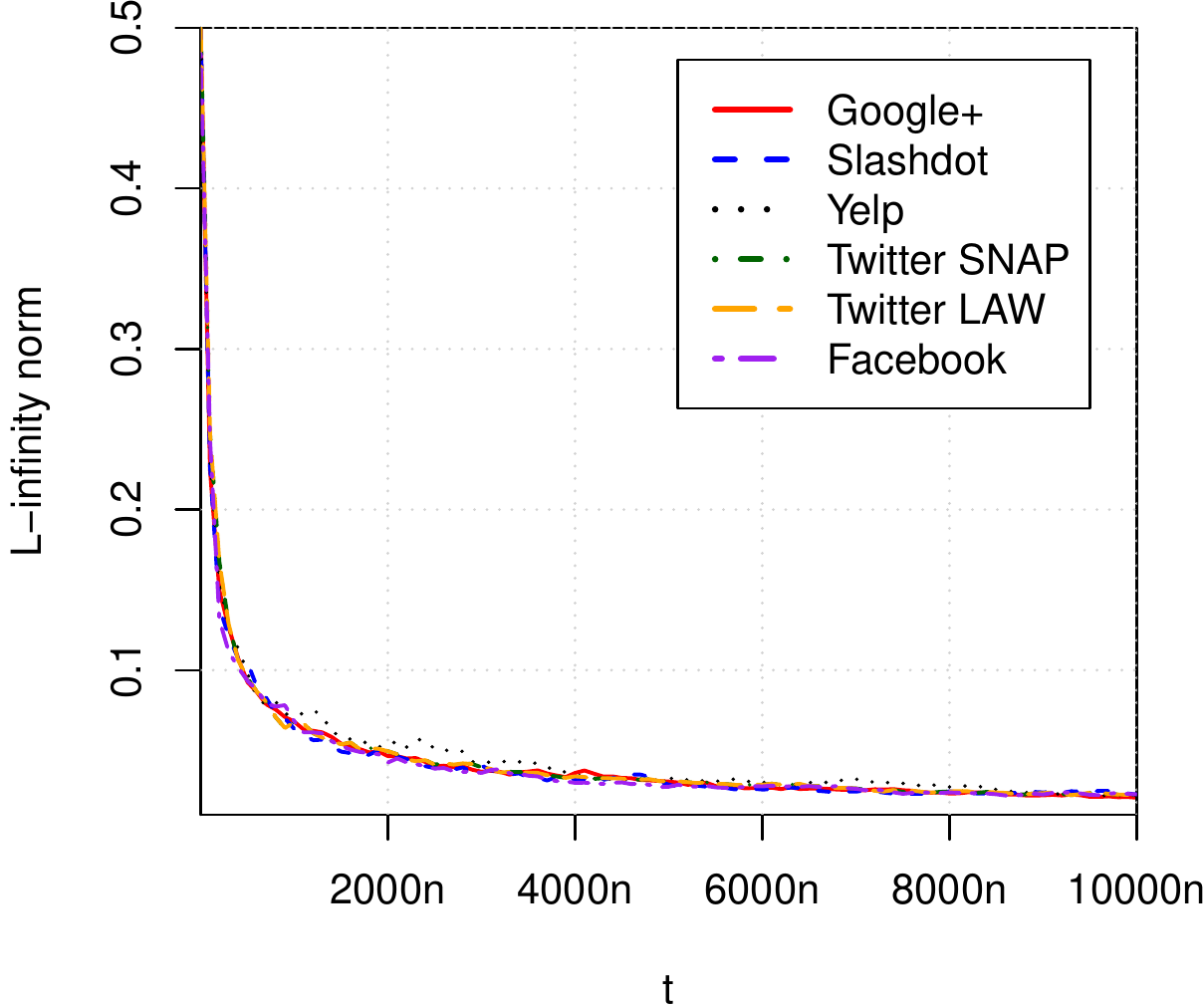}
	\caption{$\lVert \bm{x}^t - \bm{x}^\infty \rVert_\infty$ goes to $0$ as $t$ grows (long-memory scenario).}	\label{fig:simulations_linfinity_norm}
\end{figure}

\begin{figure}[t]
	\centering
	\includegraphics[scale=.6]{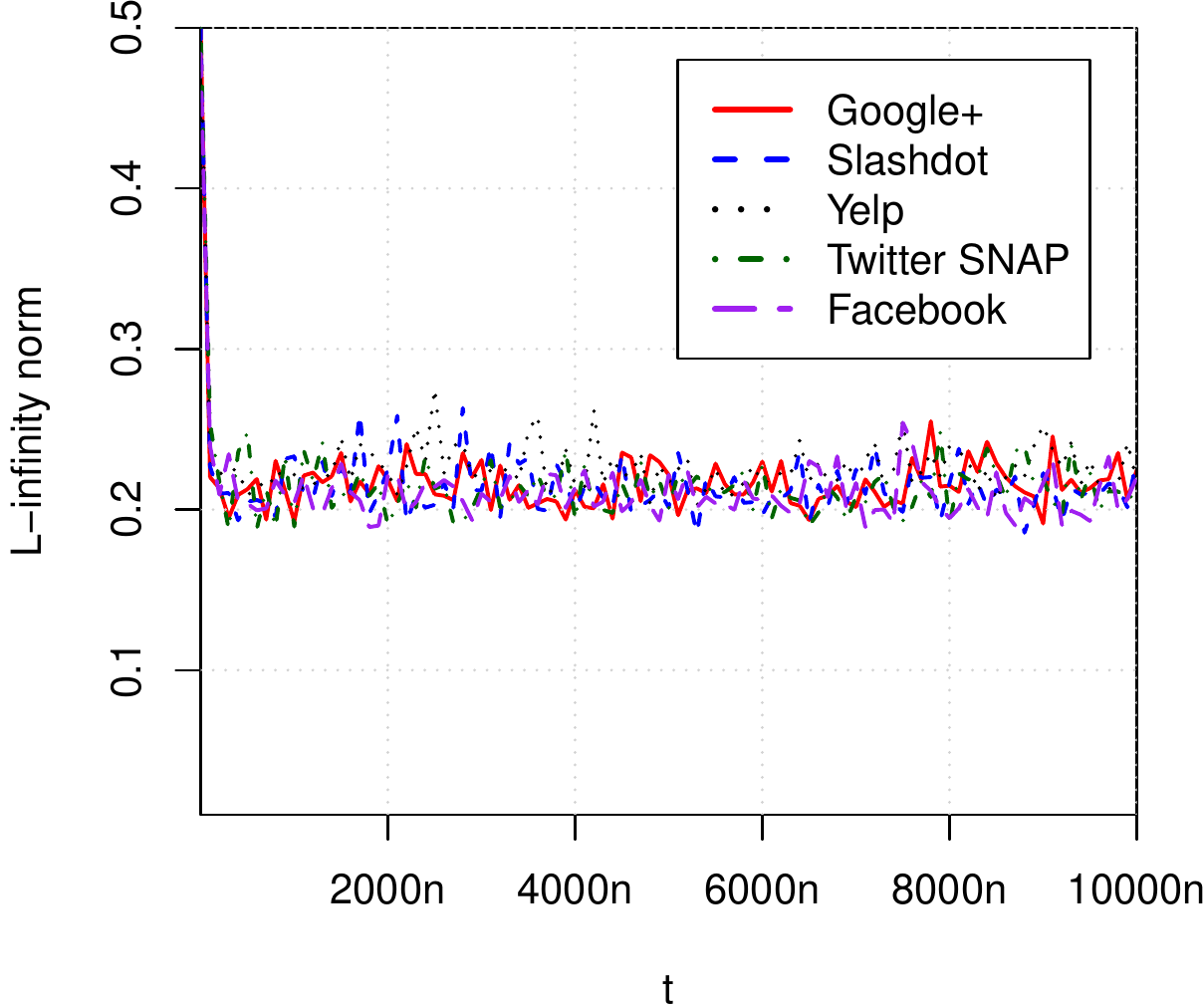}
	\caption{Plot of $\lVert \bm{x}^t - \bm{x}^\infty \rVert_\infty$ in the short-memory scenario.}
	\label{fig:simulations_linfinity_norm_h100}
\end{figure}

\begin{figure}[!t]
	\centering
	\includegraphics[scale=.6]{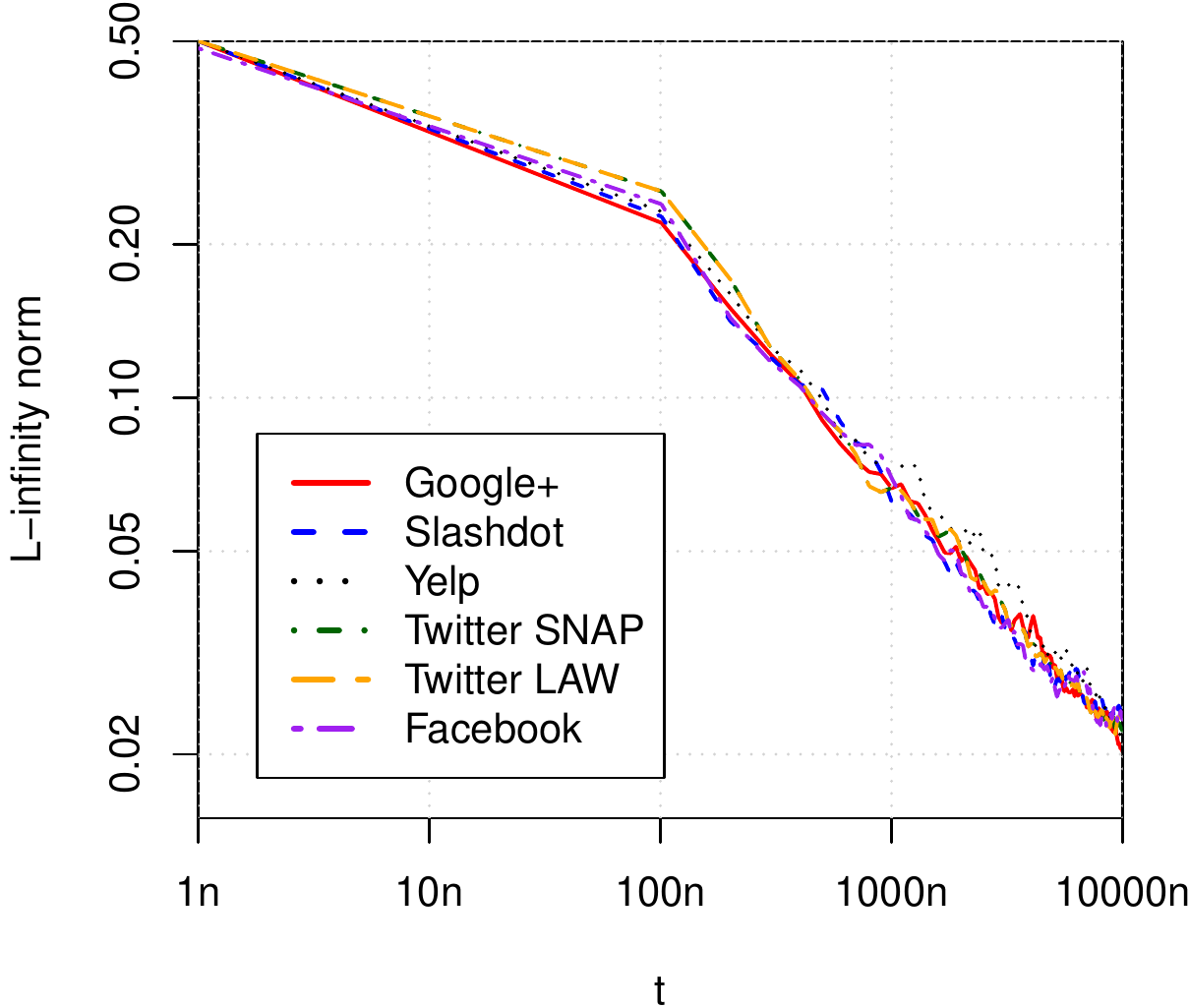}
	\caption{Log-log plot of $\lVert \bm{z}^t - \bm{x}^{\infty} \rVert_\infty$ in the short-memory scenario. The vector $\bm{z}^t$ keeps track, for each user, of the fraction of products $p^*$ bought since the beginning of the process.}
	\label{fig:simulations_linfinity_norm_log_log_h100}
\end{figure} 

Figure~\ref{fig:simulations_linfinity_norm_h100} shows what happens in the short-term setting. Initially, $\lVert \bm{x}_t - \bm{x} \rVert_\infty$  decreases rapidly, as expected, but then it starts oscillating around the value $0.22$. This behaviour might seem counter-intuitive at first, but it is a natural consequence of the short-memory of the users. As the users only consider the last $100$ purchases when making a recommendation, small differences in the number of times $p^*$ is purchased may considerably change the probability with which $p^*$ is recommended. This is a setting where $\E[\bm{x}^t]$ converges to $\bm{x}^{\infty}$ but $\bm{x}^t$ does not.
Note however the following.
If we consider the vector $\bm{z}^t$ containing the local market shares of 
$p^*$ w.r.t.\ every user $u$ (Definition~\ref{def:lms}) then, as predicted by Corollary~\ref{cor:1},
$\lVert \bm{z}^t- \bm{x}^{\infty} \rVert_\infty$ goes to 0 as $t$ increases
(see Figure~\ref{fig:simulations_linfinity_norm_log_log_h100}).
In conclusion, these experiments suggest that the LMS converges not only in expectation but in high probability.
We leave this as an open problem.

\subsection{Market distortion}
In this section we try to understand how much the market can be distorted by the recommendation system. To this end we consider the quantity
$\Delta$ defined by Equation~\ref{eq:D}
In the experiments $\bm{b}$ was instantiated with different types of distributions: uniform in $[0,1]$, exponential of mean $1/2$, a power-law of exponent $-0.01$ (i.e., $p(x) \propto x^{-0.01}$), and normal of mean $1/2$ and standard deviation $1/6$ (for distributions having support wider than $[0,1]$, the truncated version was used).

%
%
\begin{figure}[!b]
	\centering
	\includegraphics[width=0.95\columnwidth]{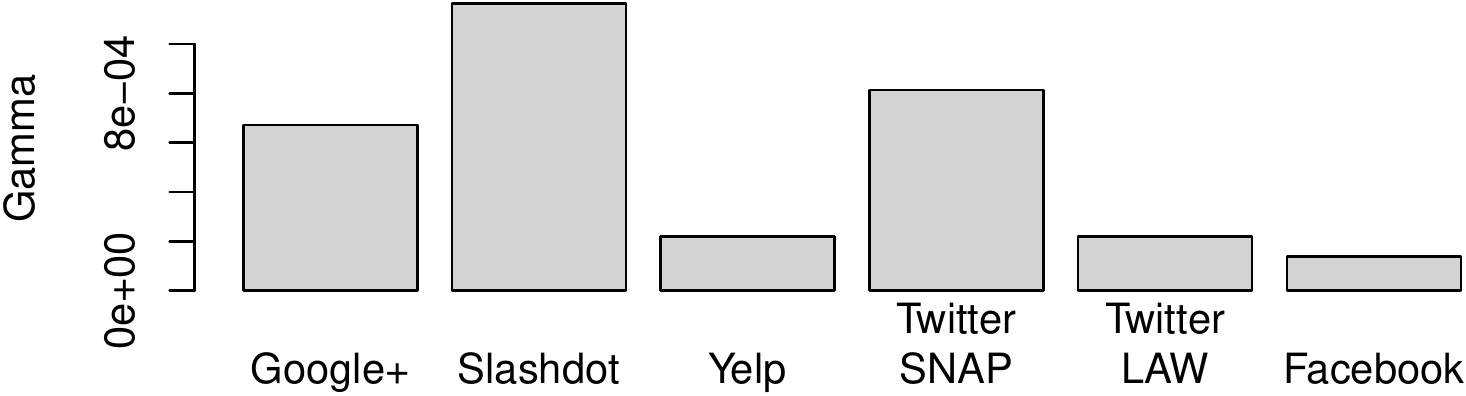}
	\includegraphics[width=0.9\columnwidth]{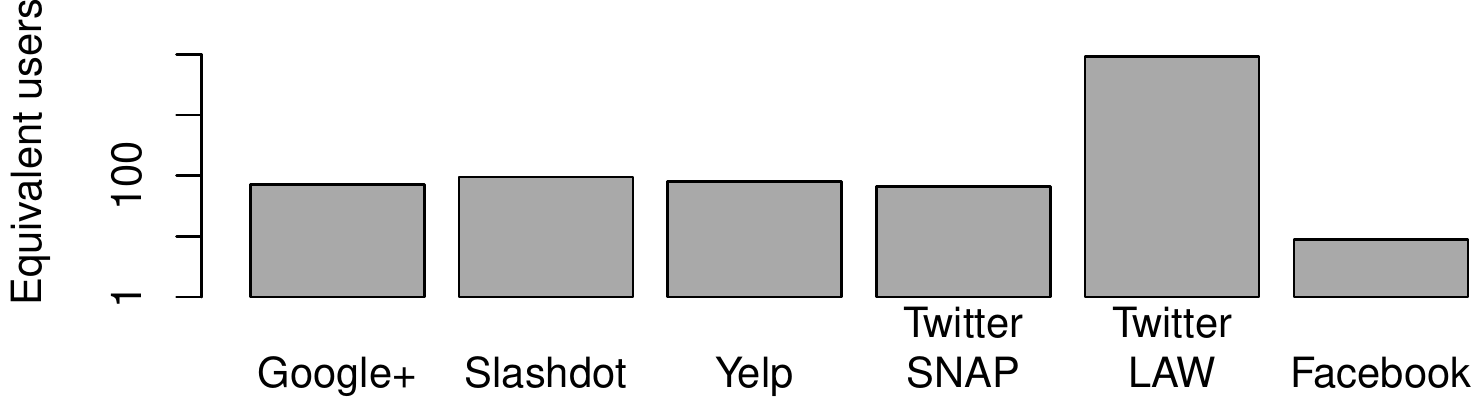}
	\caption{Fairness of social ties: values of $\gamma(G)$ (light grey) and $n \cdot \gamma(G)$ (dark grey)  on different social graphs.} 
	\label{fig:graphs_influence}
\end{figure} 
In all cases the value of $\Delta$ was always in the range $1 \pm 0.002$,  which means that market distortion was negligible for all networks considered.  This is an indication that social graphs prevent the recommender from distorting the market.

\begin{figure}[!t]
	\centering
	\includegraphics[width=0.9\columnwidth]{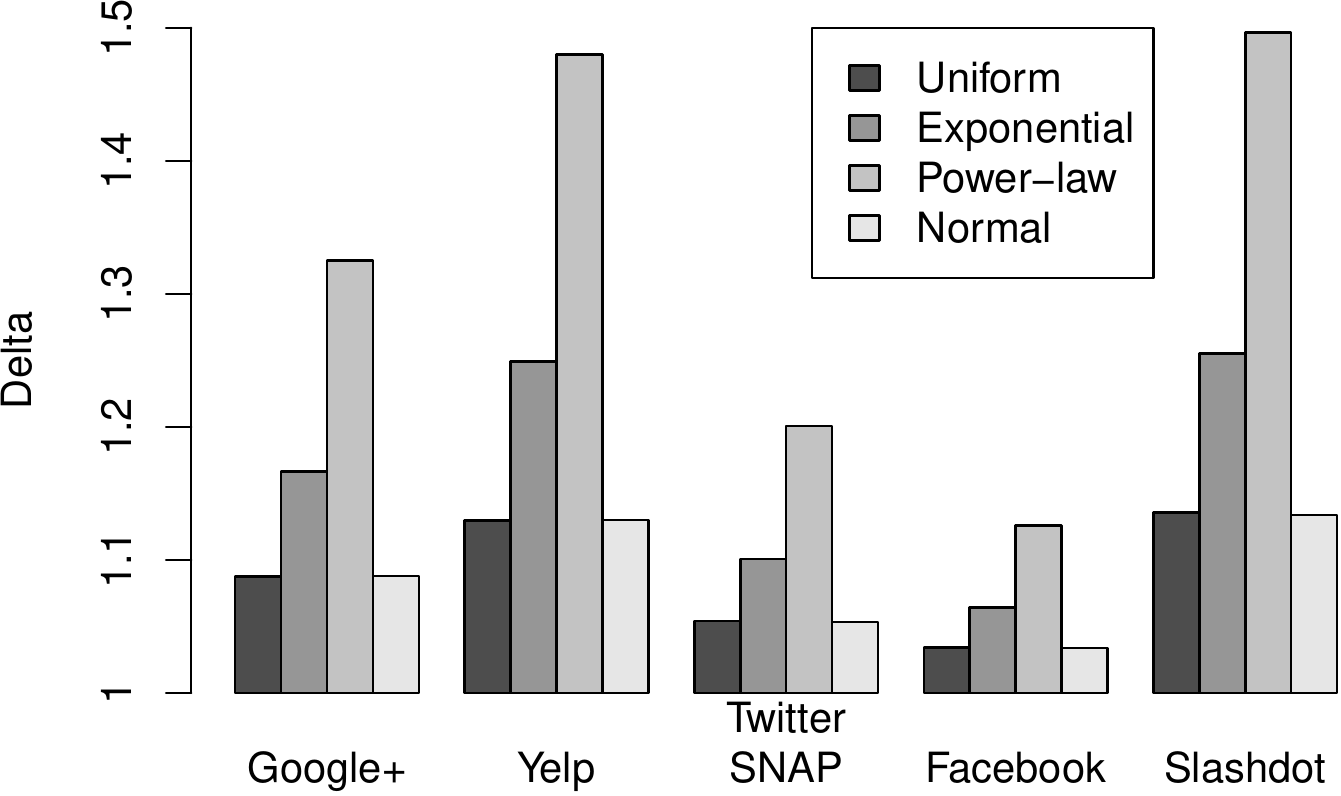}
	\caption{The effect of media tycoons: variation of the market share of $p^*$ (i.e., $\Delta$) for different graphs and starting distributions when a super-node is inserted.} 
	\label{fig:final_sales}
\end{figure} 

The next experiment determines how much (of the overall market share of $p^*$) can be ascribed to the most influential users. Recall from \S~\ref{s:an} that $\gamma(G) = \max_u \gamma_u$ is the maximum influence of any user.
While $\gamma(G)$ can be seen as a \emph{relative} measure of the importance of the most influential user, i.e., as the fraction of the network that a single user can influence,  $n \cdot \gamma(G)$ is an \emph{absolute} measure: the most influential user $u$ can be seen as controlling a group of $n \cdot \gamma(G)$ users that buy $p^*$ with probability $b_u$ at rate $1/n$. 
Figure~\ref{fig:graphs_influence} shows the values of $n \cdot \gamma(G)$ and of $\gamma(G)$ for the networks we tested. 
Even if there are users whose purchasing power corresponds to that of more than $80$ ``average users'', none of them can influence a significant portion of the network.

In order to understand if the fairness of these social graphs is an artefact of our model or one of their intrinsic properties, we investigated a modified version of our model not covered by Theorem~\ref{thm:convergence}. In the modified model, the probability with which $u$ recommends $p^*$ becomes super-linear and is proportional to $x^2_u$. 
Intuitively, this should boost the final market share of $p^*$ as soon as $p^*$ has a small advantage in the user's preference distributions. 
Therefore, we draw the personal preferences of the users so that the mean of $p^*$ is $k$ times larger than the mean of any other product, where the \emph{imbalance} $k$ ranges from $1$ to $5$.
We considered markets having both two and ten different products: the corresponding results can be seen in Figures~\ref{fig:nonlinear_beta2_prods2} and \ref{fig:nonlinear_beta2_prods10}. The purchasing process was simulated until the change in the market share of the products dropped below $0.001$, and the experiment was repeated $10$ times in order to compute the standard deviation of $\Delta$, which was always smaller than $0.007$. 
Even with this favourable setting we saw a market distortion effect on social networks that was always less than the one observed on a Clique. 
We conclude that, although the market share of the leading product $p^*$ can increase significantly when there are many competing products, social graphs do not promote, and often damps, this growth phenomena.  


\begin{figure}[!b]
	\centering
	\includegraphics[width=0.9\columnwidth]{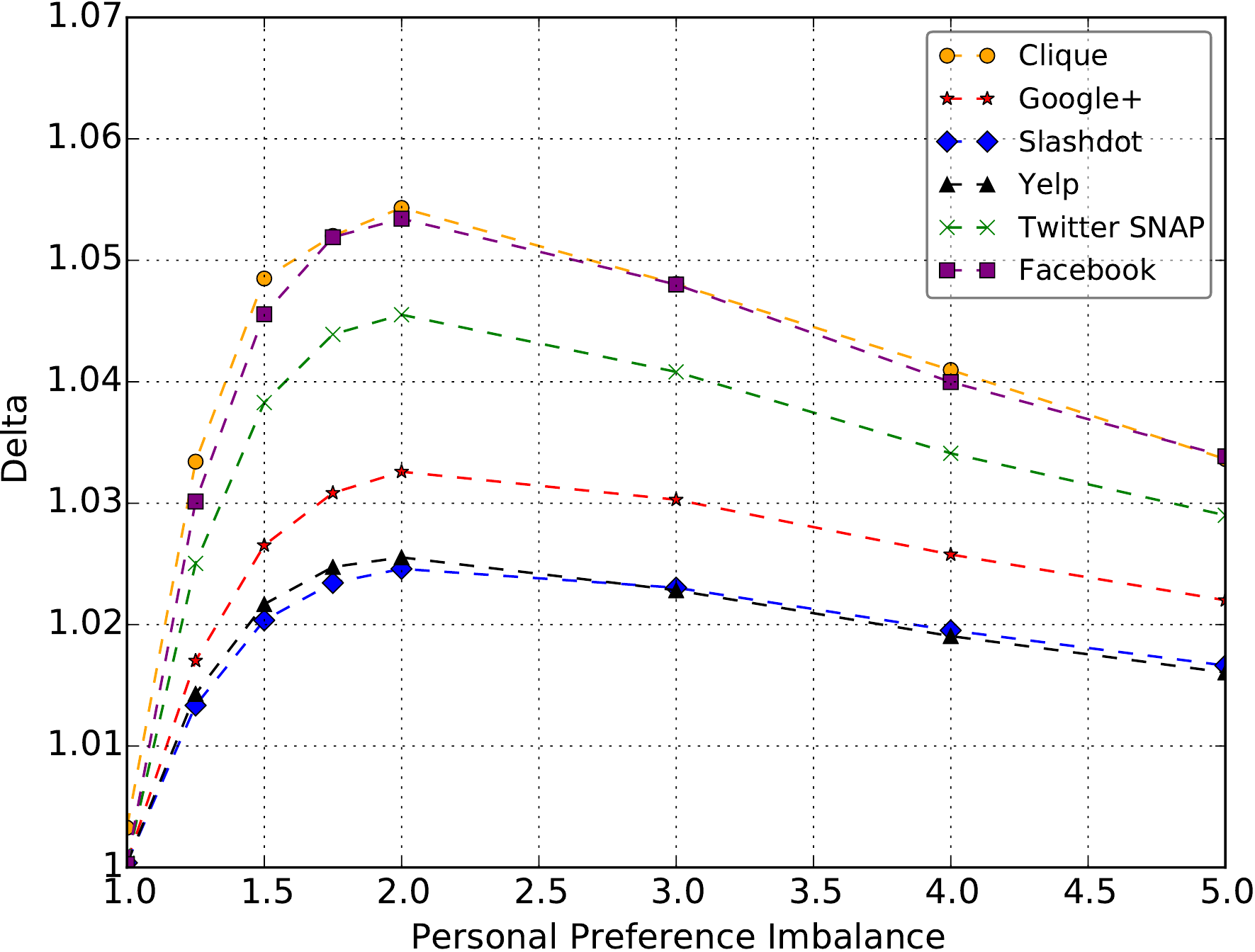}
	
	\caption{$\Delta$ for a market with $2$ products as a function of the initial imbalance for product $p^*$.} 
	\label{fig:nonlinear_beta2_prods2}
\end{figure} 

\begin{figure}[!t]
	\centering
	\includegraphics[width=0.9\columnwidth]{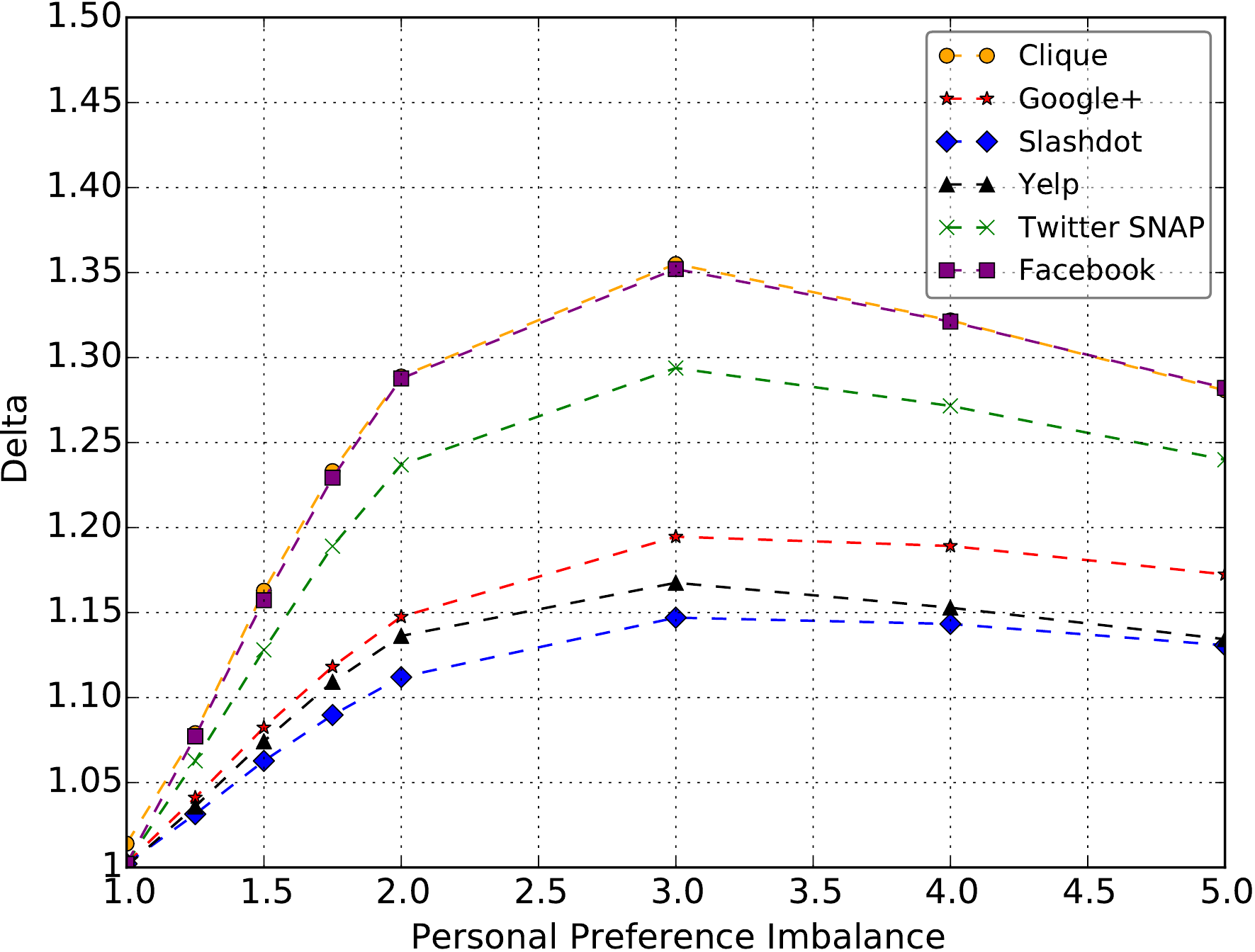}	
		\caption{$\Delta$ for a market with $10$ products as a function of the initial imbalance for product $p^*$.}  
	\label{fig:nonlinear_beta2_prods10}
\end{figure}

In the last experiment of this section we modify the topology of the networks by inserting a super-node that points to every user and always recommends $p^*$. This models the effects of mass media such as television, radio, mass advertising, etc. Perhaps
unsurprisingly, we observe that such ``media tycoons'' can influence a relevant portion of the market. Depending on the graph,  $\gamma(G)$ ranges from $3.4\%$ to $13.4\%$. Market distortion can be significant, with values of  $\Delta$ ranging from $105.3\%$ to $149.7\%$, as Figure~\ref{fig:final_sales} shows.

\subsection{Looking for the influencers}
In this section we look at the distribution of influence in social networks and give a rule of thumb to identify the most influential users.

In a clique every vertex has the same influence ($\gamma_u = \nicefrac 1n$ for each user $u$). We want to know how far real social networks are from this perfectly ``egalitarian'' situation.
Figure~\ref{fig:lorenz_curves} shows that real social networks are  quite fair 
in the sense that most of the users have an influence that is slightly smaller than $\nicefrac 1n$, while only  a small elite of users has a very large influence.
The graph where this ``oligarchy'' is most prominent is \tlaw, where $10\%$ of the overall influence on the market is due to $0.6\%$ of the users.  In all the other networks, at least $4\%$ of all the users are needed to reach the same combined influence.

\begin{figure}[!tb]
	\centering
	\includegraphics[width=0.9\columnwidth]{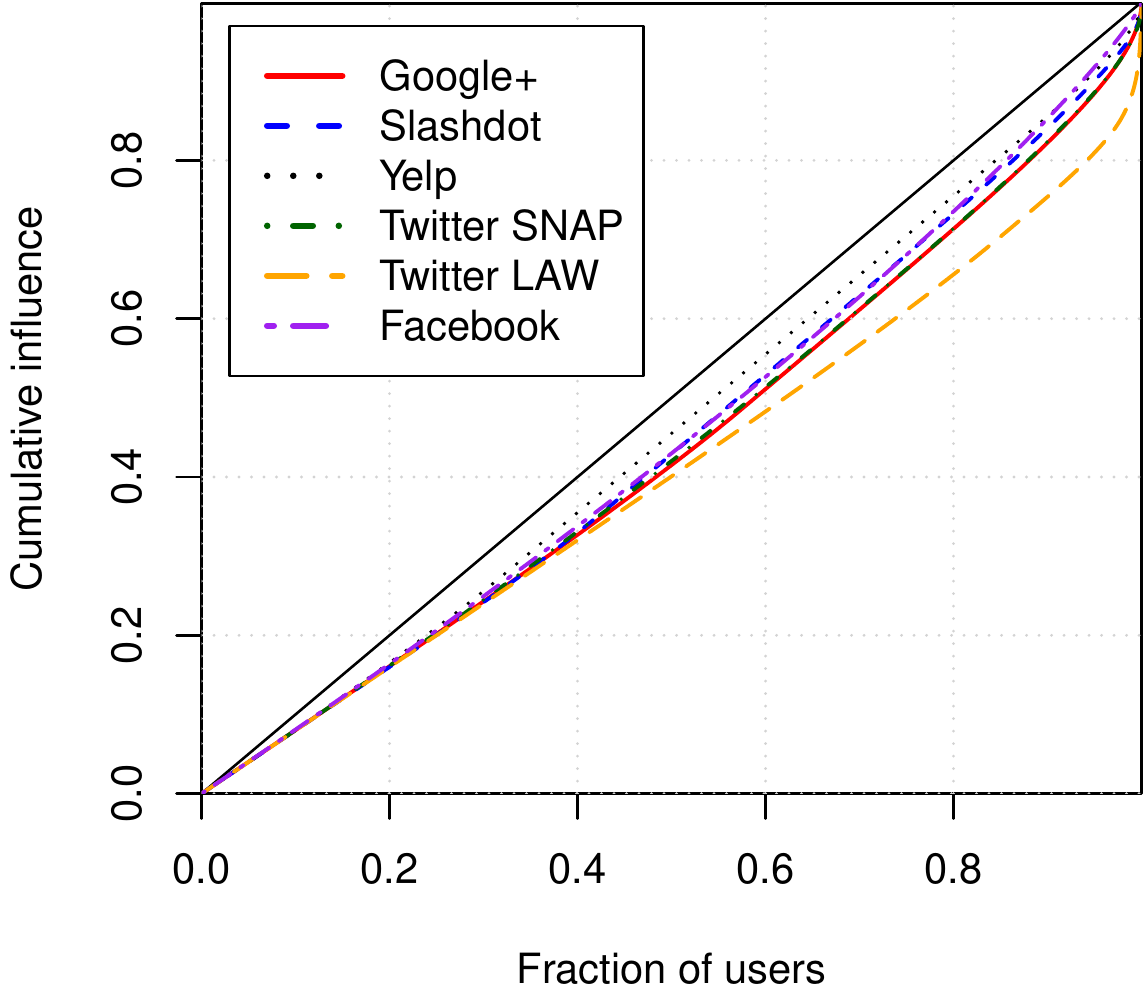}
	\caption{Lorenz curve of the vector $\gamma$ for the different graphs. The corresponding Gini indices are $0.128$, $0.109$, $0.076$, $0.125$, $0.180$, $0.101$, in order.}
	\label{fig:lorenz_curves}
\end{figure}



Next, we try to give a simple rule of thumb to identify the most influential users in a network. The bottom of
Figure~\ref{fig:graphs_gamma_degree} shows the first one hundred nodes ranked by out-degree from left to right on the $x$-axis. The $y$-axis reports their $\gamma$-rank (the shorter the bar the higher the rank). The figure shows that if a node has high out-degree than it will be influential. 
The top of Figure~\ref{fig:graphs_gamma_degree} shows that the converse is not true. Among the one hundred most influential users there are nodes with moderate to low out-degree. Who are these nodes? 
\begin{figure}[!h]
	\centering
	\includegraphics[scale=.5]{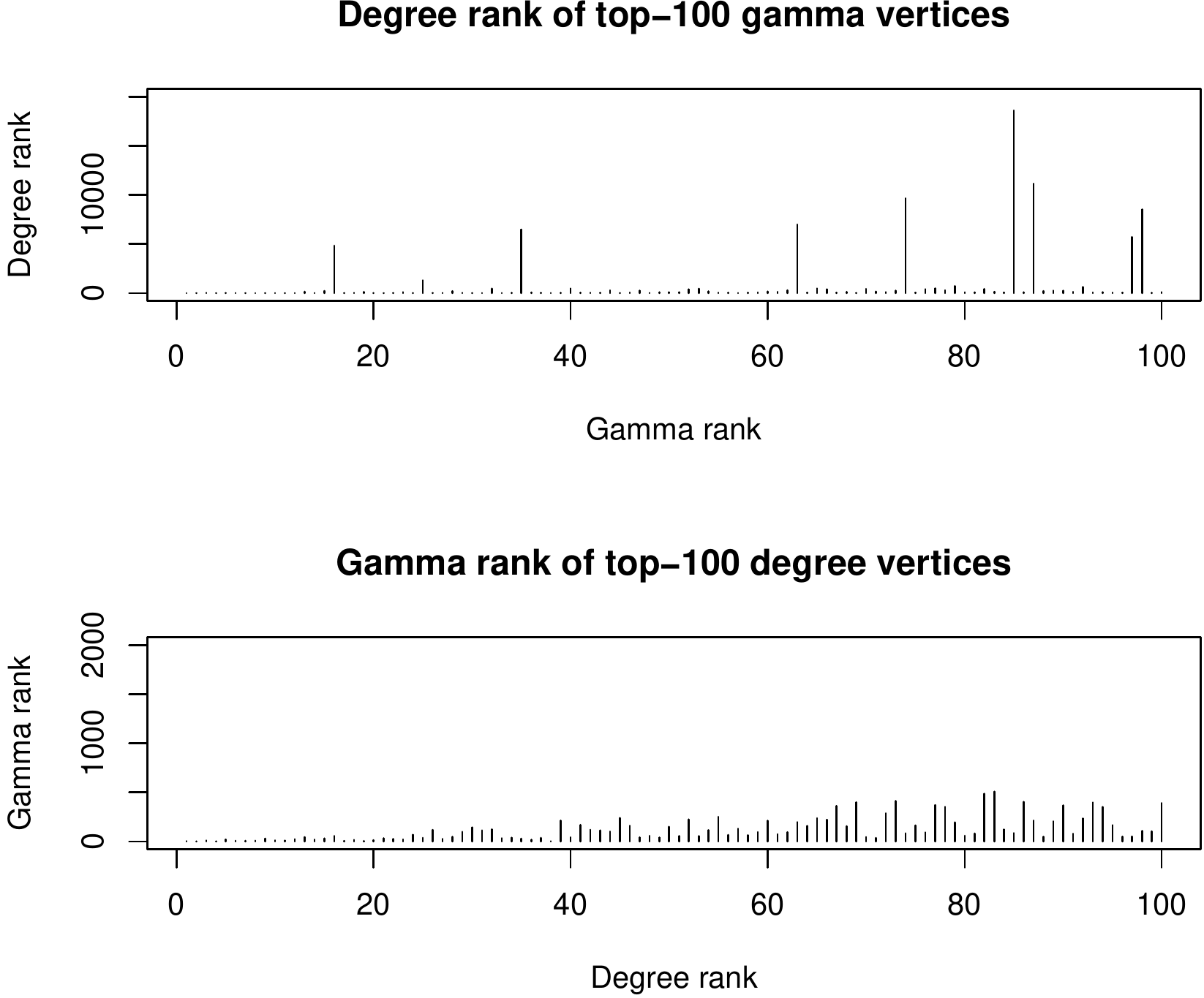}
	\caption{Top: Outdegree rank for the top-100 ranked users w.r.t.\ $\gamma$. Bottom: $\gamma$ rank for the top-100 ranked users w.r.t.\ their outdegree. The data refer to \gplus.}
	\label{fig:graphs_gamma_degree}
\end{figure}
To illustrate, consider the case of the user whose $\gamma$-rank in \gplus\ is $88$. This user lies within the top $1\%$ of the influential users, but his/her $243$ followers place him/her in position $11,155$ of the out-degree ranking, outside of the top $10\%$.
This user however has $48$ followers each of which follows $3$ users or less and, among these, $34$ happen to follow only him. In other words, this user is highly ``trusted'' in a small community. When this happens, the node will have a high $\gamma$-rank.

Finally, consider a user $u$ who is trusted by a user $v$ having high degree. We know already that $v$ will be influential. Suppose now that $v$ follows a small number of people that is, $u$ is among the few ``confidants'' of an influential user. Then, $u$ too will be influential. 

To summarize, we can characterize influential users as users who have at least one of the following properties:
\begin{itemize}
	\item A large number of followers.
	
	\item A non-negligible group of followers which, in turn, follow few other users.
	
	\item Are among the few friends of an influential user.
\end{itemize}
This rule of thumb gives a heuristic characterization of all influential users.

\vspace{20pt}
\noindent
{\bf Acknowledgments.}
The authors are grateful to Tiziana Castrignan\`o at CINECA for her precious help with the technical setup of the experiments, and to Flavio Chieri\-chet\-ti for the stimulating and insightful discussions throughout the development of this work.

\bibliographystyle{abbrv}
\small

\begin{thebibliography}{10}
\vspace{5pt}

\bibitem{Anderson&2014}
A.~Anderson, R.~Kumar, A.~Tomkins, and S.~Vassilvitskii.
\newblock The dynamics of repeat consumption.
\newblock In {\em Proc.\ of WWW}, pages 419--430. 2014.

\bibitem{Benson&2016}
A.~R.~Benson, R.~Kumar, and A.~Tomkins.
\newblock Modeling User Consumption Sequences.
\newblock In {\em Proc.\ of WWW}, pages 519--529. 2016.

\bibitem{BRSLLP}
P.~Boldi, M.~Rosa, M.~Santini, and S.~Vigna.
\newblock Layered label propagation: A multiresolution coordinate-free ordering
  for compressing social networks.
\newblock In {\em Proc.\ of WWW}, pages 587--596. 2011.

\bibitem{BoVWFI}
P.~Boldi and S.~Vigna.
\newblock The {W}eb{G}raph framework {I}: {C}ompression techniques.
\newblock In {\em Proc.\ WWW}, pages 595--601. 2004.

\bibitem{Celma&2008}
O.~Celma and P.~Cano.
\newblock From hits to niches?: Or how popular artists can bias music
  recommendation and discovery.
\newblock In {\em Proc.\ NETFLIX Competition}, pages 5:1--5:8. ACM.

\bibitem{Chaney&2015}
A.~J. Chaney, D.~M. Blei, and T.~Eliassi-Rad.
\newblock A probabilistic model for using social networks in personalized item
  recommendation.
\newblock In {\em Proc.\ of ACM RecSys}, pages 43--50. 2015.

\bibitem{Chua&2011}
F.~C.~T. Chua, H.~W. Lauw, and E.-P. Lim.
\newblock Predicting Item Adoption Using Social Correlation.
\newblock In {\em Proc.\ of SIAM ICDM}, pages 367--378. 2011.

\bibitem{Cosley&2003}
D.~Cosley, S.~K. Lam, I.~Albert, J.~A. Konstan, and J.~Riedl.
\newblock Is seeing believing?: How recommender system interfaces affect users'
  opinions.
\newblock In {\em Proc.\ of CHI}, pages 585--592. 2003.

\bibitem{Degroot1974}
M.~H.~DeGroot.
\newblock Reaching a consensus.
\newblock {\em Journal of the American Statistical Association} 69(345):118 -- 121, 1974.

\bibitem{Duan&2008b}
W.~Duan, B.~Gu, and A.~B. Whinston.
\newblock The dynamics of online word-of-mouth and product sales -- an empirical
  investigation of the movie industry.
\newblock {\em Journal of Retailing}, 84(2):233 -- 242, 2008.

\bibitem{Ehrmann&2008}
T.~Ehrmann and H.~Schmale.
\newblock The hitchhiker's guide to the long tail: The influence of
  online-reviews and product recommendations on book sales -- evidence from
  german online retailing.
\newblock In {\em Proc.\ ICIS}, 2008.

\bibitem{Fleder&2009}
D.~Fleder and K.~Hosanagar.
\newblock Blockbuster culture's next rise or fall: The impact of recommender
  systems on sales diversity.
\newblock {\em Management science}, 55(5):697--712, 2009.

\bibitem{Friedkin&1990}
N.~E.~Friedkin and E.~C.~Johnson.
\newblock Social influence and opinions.
\newblock {\em Journal of Mathematical Sociology}, 15(3-4):193--206, 1990.

\bibitem{HervasDrane2015}
A.~Hervas-Drane.
\newblock Recommended for you: The effect of word of mouth on sales concentration.
\newblock {\em International Journal of Research in Marketing}, 32(2):207--218, 2015.

\bibitem{Huang&2013}
J.~Huang and X.~Hu.
\newblock Information passing in online recommendation.
\newblock In {\em Proc.\ of UEO}, pages 3--6. 2013.

\bibitem{Jamali&2009}
M.~Jamali and M.~Ester.
\newblock Trustwalker: A random walk model for combining trust-based and item-based recommendation.
\newblock In {\em Proc.\ of KDD}, pages 397--406. 2009.

\bibitem{Jamali&2010}
M.~Jamali and M.~Ester.
\newblock A matrix factorization technique with trust propagation for recommendation in social networks.
\newblock In {\em Proc.\ of ACM RecSys}, pages 135--142. 2010.

\bibitem{KPR2001}
J.~Kleinberg, P.~Raghavan, and C.~Papadimitriou.
\newblock On the value of private information.
\newblock In {\em Proc.\ TARK}, 2001.

\bibitem{LM04}
A.~N. Langville and C.~D. Meyer.
\newblock Deeper inside {P}age{R}ank.
\newblock {\em Internet Mathematics}, 1(3):335--380, 2004.

\bibitem{Leskovec&2007}
J.~Leskovec, L.~A. Adamic, and B.~A. Huberman.
\newblock The dynamics of viral marketing.
\newblock {\em ACM Trans. Web}, 1(1), May 2007.

\bibitem{snap}
J.~Leskovec and A.~Krevl.
\newblock {SNAP Datasets}: {Stanford} large network dataset collection.
\newblock \url{http://snap.stanford.edu/data}.

\bibitem{Ma&2009}
H.~Ma, I.~King, and M.~R. Lyu.
\newblock Learning to recommend with social trust ensemble.
\newblock In {\em Proc.\ of ACM SIGIR}, pages 203--210. 2009.

\bibitem{Nguyen&2014}
T.~T. Nguyen, P.-M. Hui, F.~M. Harper, L.~Terveen, and J.~A. Konstan.
\newblock Exploring the filter bubble: The effect of using recommender systems
  on content diversity.
\newblock In {\em Proc.\ WWW}, pages 677--686. 2014.

\bibitem{Prawesh&2011}
S.~Prawesh and B.~Padmanabhan.
\newblock The ``Top N'' news recommender: Count distortion and manipulation
  resistance.
\newblock In {\em Proc.\ of ACM RecSys}, pages 237--244. 2011.

\bibitem{salganik2006}
M.~J. Salganik, P.~S. Dodds, and D.~J. Watts.
\newblock Experimental study of inequality and unpredictability in an
  artificial cultural market.
\newblock {\em Science}, 311(5762):854--856, 2006.

\bibitem{Sharma&2013}
A.~Sharma and D.~Cosley.
\newblock Do social explanations work?: Studying and modeling the effects of
  social explanations in recommender systems.
\newblock In {\em Proc.\ of WWW}, pages 1133--1144. 2013.

\bibitem{Sharma&2015}
A.~Sharma, J.~M.~Hofman, and D.~J.~Watts.
\newblock Estimating the Causal Impact of Recommendation Systems from Observational Data.
\newblock In {\em Proc.\ of ACM EC}, pages 453--470. 2015.

\bibitem{Su&2016}
J.~Su, A.~Sharma, and S.~Goel.
\newblock The effect of recommendations on network structure.
\newblock In {\em Proc.\ of WWW}. 2016 (to appear).

\bibitem{Viswanath&2009}
B.~Viswanath, A.~Mislove, M.~Cha, and K.~P. Gummadi.
\newblock On the evolution of user interaction in Facebook.
\newblock In {\em Proc.\ of ACM WOSN}, pages 37--42. 2009.

\bibitem{Yang&2011}
S.-H. Yang, B.~Long, A.~Smola, N.~Sadagopan, Z.~Zheng, and H.~Zha.
\newblock Like like alike: Joint friendship and interest propagation in social
  networks.
\newblock In {\em Proc.\ of WWW}, pages 537--546. 2011.

\bibitem{Ye&2012}
M.~Ye, X.~Liu, and W.-C. Lee.
\newblock Exploring social influence for recommendation: A generative model
  approach.
\newblock In {\em Proc.\ of ACM SIGIR}, pages 671--680. 2012.

\bibitem{Zeng&2015}
A.~Zeng, C.~H. Yeung, M.~Medo, and Y.-C. Zhang.
\newblock Modeling mutual feedback between users and recommender systems.
\newblock {\em Journal of Statistical Mechanics: Theory and Experiment}, 2015(7):P07020, 2015.

\bibitem{Zhao&2013}
T.~Zhao, C.~Li, M.~Li, Q.~Ding, and L.~Li.
\newblock Social recommendation incorporating topic mining and social trust analysis.
\newblock In {\em Proc. of ACM CIKM}, pages 1643--1648, 2013.

\bibitem{Zhao&2014}
T.~Zhao, J.~McAuley, and I.~King.
\newblock Leveraging social connections to improve personalized ranking for collaborative filtering.
\newblock In {\em Proc. of ACM CIKM}, pages 261--270. 2014.

\bibitem{zhu2014}
H.~Zhu and B.~A. Huberman.
\newblock To switch or not to switch. Understanding social influence in online choices.
\newblock {\em American Behavioral Scientist}, 58(10):1329--1344, 2014.

\end{thebibliography}


\normalsize
\appendix
\label{sec:appendix}

\subsubsection*{Proof of Theorem~\ref{thm:convergence}.}
\label{apx:thm_convergence}
We make use of the following lemma:
\begin{lemma}
\label{lem:invertible}
The matrix $(\I - \A\M)$ is nonsingular, and the matrix $(\I - \A\M)^{-1}(\I-\A)$ is row stochastic (i.e.\ is real, non-negative, and each row sums to $1$).
\end{lemma}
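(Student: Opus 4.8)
The plan is to exploit the fact that $\A\M$ is a nonnegative matrix whose row sums are strictly less than one, which delivers both invertibility (via a Neumann-series argument) and the two ingredients of row-stochasticity in a unified way. The first thing I would pin down is the key numerical fact. Since $\M$ is the transposed weighted adjacency matrix and the weights on the arcs entering each vertex $u$ sum to one, the $u$-th row sum of $\M$ is $1$ whenever $u$ has incoming arcs and $0$ otherwise. Left-multiplying by the diagonal matrix $\A$, the $u$-th row sum of $\A\M$ becomes $\alpha_u\cdot 1=\alpha_u$ in the first case and $\alpha_u\cdot 0 = 0$ in the second. Crucially, the model stipulates $\alpha_u=0$ whenever $u$ has no incoming arcs, so in both cases the row sum equals $\alpha_u$, and hence $\lVert \A\M \rVert_\infty = \max_u \alpha_u =: \alpha < 1$.

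From here invertibility is immediate: the spectral radius is bounded by any induced operator norm, so $\rho(\A\M)\le \lVert \A\M \rVert_\infty = \alpha < 1$, and therefore $\I-\A\M$ is nonsingular with inverse given by the convergent Neumann series $(\I-\A\M)^{-1}=\sum_{i=0}^{\infty}(\A\M)^i$. This same series identity directly yields nonnegativity of the product in question: each term $(\A\M)^i$ is a product of nonnegative matrices, hence nonnegative, so $(\I-\A\M)^{-1}$ is nonnegative; and $(\I-\A)$ is a nonnegative diagonal matrix (its entries are $1-\alpha_u>0$). The product of two nonnegative matrices is nonnegative, giving the first half of row-stochasticity for $\bm{L}:=(\I-\A\M)^{-1}(\I-\A)$ (reality being automatic since $\A$ and $\M$ are real).

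For the row sums, rather than manipulating the inverse directly I would test $\bm{L}$ against the all-ones vector $\1$. Writing $(\I-\A\M)\1=\1-\A(\M\1)$ and using the row-sum computation above gives $\A(\M\1)=\bm{\alpha}=\A\1$, so that $(\I-\A\M)\1=\1-\A\1=(\I-\A)\1$. Applying $(\I-\A\M)^{-1}$ to both sides then yields $\1=(\I-\A\M)^{-1}(\I-\A)\1=\bm{L}\1$, i.e.\ every row of $\bm{L}$ sums to one. The only real subtlety, and the one step I would spell out most carefully, is the treatment of vertices with no incoming arcs, where $\M$ has a zero row; this is exactly where the modeling convention $\alpha_u=0$ is needed to keep the identity $\A\M\1=\A\1$ (and hence the whole argument) valid. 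Everything else reduces to routine linear algebra.
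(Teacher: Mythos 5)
Your proof is correct and follows essentially the same route as the paper's: nonnegativity via the Neumann series $\sum_{i\ge 0}(\A\M)^i$ together with nonnegativity of $\I-\A$, and row sums via the identity $(\I-\A\M)\1=(\I-\A)\1$ tested against the all-ones vector. The only difference is cosmetic — for invertibility you use $\lVert\A\M\rVert_\infty\le\alpha<1$ to bound the spectral radius, whereas the paper invokes strict diagonal dominance (Levy--Desplanques) — and you are, if anything, more careful than the paper in handling vertices with no incoming arcs, where the convention $\alpha_u=0$ is exactly what keeps the row-sum identity (and the Neumann-series justification, which the paper uses without comment) valid.
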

\begin{proof}
Since $\alpha_u < 1$ for all $u$ and $\M$ is row-stochastic, then $(\I - \A\M)$ is strictly diagonally dominant; hence it is invertible by the Levy-Desplanques theorem.
Now $(\I - \A\M)^{-1} = \sum_{i \geq 0} (\A\M)^i$ is real and nonnegative, as well as $(\I-\A)$, hence so is their product $(\I - \A\M)^{-1}(\I-\A)$.
Finally, since $\M\cdot\1 = \1$, then also $(\I-\A)\cdot\1 = (\I-\A\M)\cdot\1$, and thus $(\I - \A\M)^{-1}(\I-\A) \cdot \1 = (\I - \A\M)^{-1}(\I-\A\M) \cdot \1 = \1$.
\end{proof}
Lemma~\ref{lem:invertible} implies that the vector $\xlim$ exists and has components in $[0,1]$.
Let us now turn to prove the main statement.

\textsc{``if'' side.}
This part is divided in three steps: writing a recurrence equation for $\E[x_u^t]$, converting it into a matricial recurrence equation for $\E[\x^t] - \xlim$, and showing that $|\E[\x^t] - \xlim|$ tends to the null vector $\0$.

\textbf{A recurrence equation for $\E[x_u^t]$.}
The proof makes use of two additional random variables:
$\psi_u^t$, which indicates the event that $u$ buys at step $t$, and
$\chi_u^t$, which indicates the event that $u$ buys $p^*$ at step $t$.
Then $x_u^t$ can be written as:
\begin{equation}
\label{eqn:x_definition}
 x_u^t = \sum_{i=1}^{k_u} h_u^{t,-i} \chi_u^{-i} \; + \; \sum_{i=1}^{t-1} h_u^{t,i} \chi_u^{i}
\end{equation}
where the $\chi_u^{-i}$ are actually constants since the initial history is fixed.
Gather the first summation in a single random variable $h^{t,0}_u$, and take the expectation of both sides:
\begin{align}
\E[x_u^t]
&= \E[h_u^{t,0}] \; + \; \sum_{i=1}^{t-1} \E[ h_u^{t,i} \chi_u^{i} ]
\label{eqn:x_exp_recur_weig}
\end{align}
Note that $\E[\,\cdot\, z] = \E[\,\cdot\, |z=1]\E[z]$ if $z$ is a binary random variable.
Also, the weights do not depend on the specific products bought, so $\E[h_u^{t,i} | \chi_u^{i} = 1] = \E[h_u^{t,i} | \psi_u^{i} = 1]$.
Using these facts, one can easily rewrite $\E[h_u^{t,i} \chi_u^{i}]$ as $\E[h_u^{t,i} \psi_u^{i}] \E[\chi_u^{i} | \psi_u^{i} = 1]$.
We can then restate Equation~\ref{eqn:x_exp_recur_weig} in the following way:
\begin{align}
\E[x_u^t] &= \E[h_u^{t,0}] \; + \; \sum_{i=1}^{t-1} \E[ h_u^{t,i} \psi_u^{i}] \E[\chi_u^{i} | \psi_u^{i} = 1]
\label{eqn:x_exp_recur_weig_3}
\end{align}

Let us now focus on $\E[\chi_u^{i} | \psi_u^{i} = 1]$, the probability that $u$ buys $p^*$ at step $i$ given that he buys.
By definition of the purchasing process, we have:
\begin{align}
\E[\chi_u^{i}|\psi_u^{i}\!=\!1]
&= (1-\alpha_u) b_u + \alpha_u \!\!\!\!\! \sum_{(v,u) \in G}\!\!\!\!\! w_{vu}\, \E[x_v^i|\psi_u^{i} = 1]
\label{eqn:E_chi_psi}
\end{align}
However, we can replace $\E[x_v^i|\psi_u^{i} = 1]$ by $\E[x_v^i]$ as $x_v^i$ only depends on $v$'s past purchases.
Plugging the resulting expression for $\E[\chi_u^{i}|\psi_u^{i} = 1]$ into Equation~\ref{eqn:x_exp_recur_weig_3}, we obtain the following recurrence relation:
\begin{equation}
 \label{eqn:x_recur}
 \E[x_u^t] = \E[h_u^{t,0}] \; + \; \sum_{i=1}^{t-1} \E[h_u^{t,i} \psi_u^{i}] ( (1-\alpha_u) b_u + \alpha_u \!\!\!\! \sum_{(v,u) \in G}\!\!\!\! w_{vu}\, \E[x_v^i])
\end{equation}

\textbf{A recurrence equation for $\E[\x^t] - \xlim$.}
We next turn Equation~\ref{eqn:x_recur} into an expression for $\E[\x^t]$.
For all $t > 0$ let $\bh_{t,0}$ be the vector $[\E[h_1^{t,0}], \ldots, \E[h_n^{t,0}]]$;
similarly, for all $t>i>0$ let $\bH_{t,i}$ be the matrix $\operatorname{diag}(\E[h_1^{t,i}\psi_1^{i}], \ldots, \E[h_n^{t,i}\psi_n^{i}])$.
Finally, recall the definitions of $\A, \M, \bb, \x^i$.
The following matricial counterpart to Equation~\ref{eqn:x_recur} holds:
\begin{align}
\label{eqn:x_recur_mat}
 \E[\x^t] = \bh_{t,0} + \sum_{i=1}^{t-1} \bH_{t,i} ( (\I-\A)\bb + \A \M \E[\x^i] )
\end{align}
We now subtract $\xlim$ from both sides -- but putting the right-hand side in a convenient form.
Indeed, by definition of $\bh_{t,0}$ and $\bH_{t,i}$ and by the fact that $\sum_{i=1}^{k_u} h_u^{t,-i} + \sum_{i=1}^{t-1} h_u^{t,i} \psi_u^{i} = 1$, if we define $\bh_{t,0}'$ as the vector whose $u$-th component is $\E[h_u^{t,0}] - \E[\sum_{i=1}^{k_u} h_u^{t,-i}] x_u^{\infty}$, one can write:
\begin{align}
\label{eqn:barx_recur_mat}
 \E[\x^t] - \xlim = \bh_{t,0}' + \sum_{i=1}^{t-1} \bH_{t,i} ( (\I-\A)\bb + \A \M \E[\x^i] - \xlim)
\end{align}
But by definition of $\xlim$ we have $(\I-\A)\bb = (\I - \A\M)\xlim$; thus $(\I-\A)\bb - \xlim = -\A\M\xlim$ and Equation~\ref{eqn:barx_recur_mat} becomes:
\begin{align}
\label{eqn:x_recur_mat_3}
 \E[\x^t] -  \xlim &= \bh_{t,0}' + \sum_{i=1}^{t-1} \bH_{t,i} \A\M(\E[\x^i] - \xlim)
\end{align}

\textbf{$|\E[\x^t] -  \xlim|$ goes to zero.}
We finally show that $|\E[\x^t] -  \xlim|$, the vector of the absolute values of $\E[\x^t] -  \xlim$, vanishes with $t$.
The intuition is that $\bh_{t,0}'$ and some of the first terms of the summation in Equation~\ref{eqn:x_recur_mat_3} go to zero with $t$ by hypothesis, and the rest is damped by the norm of $\A$ which is strictly smaller than $1$; this gives a bound on $|\E[\x^t] -  \xlim|$ which can be plugged in again, proving the thesis by induction.

Hereafter, inequalities are meant component-wise.
First, since $\bH_{t,i}$, $\A$, $\M$ are non-negative, Equation~\ref{eqn:x_recur_mat_3} implies:
\begin{align}
\label{eqn:x_recur_norm}
 |\E[\x^t] -  \xlim| &\le |\bh_{t,0}'| + \sum_{i=1}^{t-1} \bH_{t,i} \A \M | \E[\x^i] - \xlim |
\end{align}
Suppose now $|\E[\x^t] -  \xlim| \le c\1$ for all $t > \tau$, for some constants $\tau, c \geq 0$; this trivially holds for $\tau = 0$ and $c = 1$.
One can check that, by hypothesis of forgetfulness, each element of $\bh_{t,0}'$ goes to zero with $t$, and the same does each element of $\bH_{t,i}$ for any fixed $i$.
Thus for any given $\epsilon > 0$ there exists a $t' \geq \tau$ such that for all $t > t'$:
\begin{align}
 |\bh_{t,0}'| + \sum_{i=1}^{\tau} \bH_{t,i} \A \M | \E[\x^i] - \xlim| \le \epsilon
\end{align}
and then Equation~\ref{eqn:x_recur_norm} gives, for all $t > t'$:
\begin{align}
\label{eqn:x_recur_norm_3}
 |\E[\x^t] - \xlim| &\le \epsilon \1 + \sum_{i=\tau+1}^{t-1} \bH_{t,i} \A \M | \E[\x^i] - \xlim |\\
&\le \epsilon \1 + \sum_{i=\tau+1}^{t-1} \bH_{t,i} \A\M \cdot c\1
\label{eqn:x_recur_norm_4}
\end{align}
Each row of $\A\M$ sums to at most $\alpha$ where $\alpha = \max_{v\in G}{\alpha_v}$; while, since $\sum_{i=1}^{t-1} h_u^{t,i} \psi_u^{i} \leq 1$, each row of $(\sum_{i=\tau+1}^{t-1} \bH_{t,i})$ sums to at most $1$.
Each row of $\sum_{i=\tau+1}^{t-1} \bH_{t,i} \A\M$ thus sums to at most $\alpha$.
And then for all $t > t'$:
\begin{align}
\label{eqn:x_norm_bound}
 |\E[\x^t] - \xlim| &\le \epsilon \1 + \alpha c \1
\end{align}
which for e.g.\ $\epsilon = \frac{1-\alpha}{2} c$ gives $|\E[\x^t] - \xlim| \le \frac{1+\alpha}{2} c \1$.
But we can then repeat the whole argument multiple times, each time replacing $\tau$ by $t'$ and $c$ by $\frac{1+\alpha}{2}c$ and tightening the bound on $|\E[\x^t] - \xlim|$ by a factor $\frac{1+\alpha}{2}$ for all sufficiently large $t$.
Since $\alpha < 1$ and thus $\frac{1+\alpha}{2} < 1$, this implies
\begin{align}
\label{eqn:limit}
 \lim_{t \rightarrow +\infty} |\E[\x^{t}] - \xlim| = \0
\end{align}
which concludes this part of the proof.

\textsc{``only if'' side}.
We show a simple counterexample.
Pick a $G$ where all $u$ have incoming arcs and let all $\alpha_u > 0$.
Also, let all $u$ buy the same product $a$ at time $-1$ and set $h_u^{t,-1} = 1$ for every $t$.
All nodes thus always recommend $a$, so if $\lim_{t \to \infty} \E[\x^t]$ exists it must depend on whether or not $p^* = a$; thus it cannot be $\xlim$.
Moreover, one can make the system never converge by putting two different products $a$ and $b$ in the initial purchasing history of all users and making them periodically recommend only $a$ or only $b$.

\end{document}